\newcommand{\snip}{\mathbin{\text{\raisebox{0.15ex}{\rotatebox[origin=c]{60}{\Rightscissors}\!}}}}
\DeclareMathOperator{\extend}{extend}
\DeclareMathOperator{\update}{update}
\DeclareMathOperator{\rev}{rev}
\DeclareMathOperator{\tail}{tail}
\DeclareMathOperator{\head}{head}
\DeclareMathOperator{\sig}{sig}
\newtheorem{theorem}{Theorem}[section]
\newtheorem{corollary}[theorem]{Corollary}
\newtheorem{lemma}[theorem]{Lemma}
\begin{document}

\pagestyle{myheadings}
\markboth{Minimum cycle and homology bases}
  {Glencora Borradaile, Erin Wolf Chambers, Kyle Fox, and Amir Nayyeri}

\begin{titlepage}

\title{Minimum cycle and homology bases of surface embedded graphs%
  \thanks{A preliminary version of this work was presented at the 32nd Annual International
    Symposium on Computational Geometry~\cite{bcfn-mchbs-16}.
    This material is based upon work supported by the National Science Foundation under
grants CCF-12-52833, CCF-10-54779, IIS-13-19573, CCF-11-61359, IIS-14-08846, CCF-15-13816, CCF-1617951, and
IIS-14-47554; by
an ARO grant W911NF-15-1-0408; and by Grant 2012/229 from the U.S.-Israel Binational Science
Foundation.}}

\author{
  Glencora Borradaile%
  \thanks{Oregon State University, Corvallis, OR;
    \url{{glencora,nayyeria}@eecs.orst.edu}.}
  \and
  Erin Wolf Chambers%
  \thanks{St.\ Louis University, St. Louis, MO;
    \url{echambe5@slu.edu}.}
  \and
  Kyle Fox%
  \thanks{Duke University, Durham, NC;
    \url{kylefox@cs.duke.edu}.}
  \and
  Amir Nayyeri\footnotemark[2]
}

\maketitle

\begin{abstract}
We study the problems of finding a minimum cycle basis (a minimum weight set of cycles
that form a basis for the cycle space) and a minimum homology basis (a minimum weight set of
cycles that generates the $1$-dimensional ($\Z_2$)-homology classes) of an undirected graph
embedded on a surface.
The problems are closely related, because the minimum cycle basis of a graph contains its minimum
homology basis, and the minimum homology basis of the $1$-skeleton of any graph is exactly its
minimum cycle basis.

For the minimum cycle basis problem, we give a deterministic $O(n^\omega+2^{2g}n^2+m)$-time
algorithm for graphs embedded on an orientable surface of genus~$g$.
The best known existing algorithms for surface embedded graphs are those for general graphs: an
$O(m^\omega)$ time Monte Carlo algorithm~\cite{AIJMR09} and a deterministic $O(nm^2/\log n + n^2
m)$ time algorithm~\cite{MM09}.
For the minimum homology basis problem, we give a deterministic $O((g+b)^3 n \log n + m)$-time
algorithm for graphs embedded on an orientable or non-orientable surface of genus~$g$ with~$b$
boundary components, assuming shortest paths are unique, improving on existing algorithms for many
values of~$g$ and~$n$.
The assumption of unique shortest paths can be avoided with high probability using randomization
or deterministically by increasing the running time of the homology basis algorithm by a factor
of~$O(\log n)$.
\end{abstract}

\noindent

\thispagestyle{empty}
\setcounter{page}{0}
\end{titlepage}

\section{Introduction}
\label{sec:intro}

\subsection{Minimum cycle basis}
Let~$G = (V,E)$ be a connected undirected graph with~$n$ vertices and~$m$ edges.
We define a \EMPH{cycle} of~$G$ to be a subset~$E' \subseteq E$ where each vertex~$v \in V$ is
incident to an even number of edges in~$E'$.
The \EMPH{cycle space} of~$G$ is the vector space over cycles in~$G$ where addition is defined as
the symmetric difference of cycles' edge sets.
It is well known that the cycle space of~$G$ is isomorphic to~$\Z_2^{m - n + 1}$; in particular,
the cycle space can be generated by the fundamental cycles of any spanning tree of~$G$.
A \EMPH{cycle basis} is a maximal set of independent cycles.
A \EMPH{minimum cycle basis} is a cycle basis with a minimum number of edges (counted with
multiplicity) or minimum total weight if edges are weighted%
\footnote{There is a notion of minimum cycle bases in directed graphs as well, but we focus on
the undirected case in this paper.}.
Minimum cycle bases have applications in many areas such as electrical circuit
theory~\cite{k-1847,cc-osccb-73}, structural engineering~\cite{chr-cbmms-76}, surface
reconstruction~\cite{tgg-mgpcd-06}, and the analysis of algorithms~\cite{k-acp-68}.

Sets of independent cycles form a matroid, so the minimum cycle basis can be computed using the
standard greedy algorithm.
However, there may be an exponential number of cycles in~$G$.
Horton~\cite{h-ptafs-87} gave the first polynomial time algorithm for the problem by observing
that every cycle in the minimum cycle basis is the fundamental cycle of a shortest path tree.
Several other algorithms have been proposed to compute minimum cycle bases in general
graphs~\cite{p-aspm-95,gh-ptafm-02,bgv-mcbng-04,kmmp-famcb-08,MM09,AIJMR09}.
The fastest of these algorithms are an~$O(m^\omega)$ time Monte Carlo randomized algorithm of
Amaldi \etal~\cite{AIJMR09} and an~$O(nm^2/\log n + n^2m)$ time deterministic algorithm of
Mehlhorn and Michail~\cite{MM09}.
Here,~$O(m^\omega)$ is the time it takes to multiply two $m \times m$ matrices using fast matrix
multiplication.

For the special case of \emph{planar graphs}, faster algorithms are known.
Hartvigsen and Mardon~\cite{HM94} observed that the cycles in the minimum cycle basis nest,
implicitly forming a tree; in fact, the edges of each cycle span an $s,t$-minimum cut between two
vertices in the dual graph, and the Gomory-Hu tree~\cite{gh-mtnf-61} of the dual graph is
precisely the tree of minimum cycle basis in the primal.
Hartvigsen and Mardon~\cite{HM94} gave an~$O(n^2 \log n)$ time algorithm for the minimum
cycle basis problem in planar graphs, and Amaldi \etal~\cite{AIJMR09} improved their running time
to~$O(n^2)$.
Borradaile, Sankowski, and Wulff-Nilsen~\cite{BSW15} showed how to compute an \emph{oracle} for
the minimum cycle basis and dual minimum cuts in~$O(n \log^4 n)$ time that is able to report
individual cycles or cuts in time proportional to their size.
Borradaile \etal~\cite{benw-apmcn-16} recently generalized the minimum cut oracle to graphs
embeddable on an orientable surface of genus~$g$.
Their oracle takes~$2^{O(g^2)} n \log^3 n$ time to construct (improving upon the original
planar oracle by a factor of~$\log n$).
Unfortunately, their oracle does not help in finding the minimum cycle basis in higher genus
graphs, because there is no longer a bijection between cuts in the dual graph and cycles in the
primal graph.

That said, it is not  surprising that the cycle basis oracle has not been generalized beyond
the plane.
While cuts in the dual continue to nest in higher genus surfaces, cycles do not.
In fact, the minimum cycle basis of a toroidal graph must \emph{always} contain at least one pair
of crossing cycles, because  any cycle basis must contain cycles which are topologically
distinct.
These cycles must represent different \emph{homology classes} of the surface.

\subsection{Minimum homology basis}

Given a graph~$G$ embedded in a surface~$\Sigma$ of genus~$g$ with~$b$ boundary components, the
homology of~$G$ is an algebraic description of the topology of~$\Sigma$ and of $G$'s embedding.
In this paper, we focus on one-dimensional cellular homology over the finite field~$\Z_2$.
Homology of this type allows for simplified definitions.
We say a cycle~$\eta$ is \EMPH{null-homologous} if $\eta$ is the boundary of a subset of faces.
Two cycles~$\eta$ and~$\eta'$ are \EMPH{homologous} or in the same \EMPH{homology class} if
their symmetric difference $\eta \oplus \eta'$ is null-homologous.
Let $\beta = g - \max\Set{b-1,0}$ if~$\Sigma$ is \emph{non-orientable} (contains a subset
homeomorphic to the M\"{o}bius band), and let $\beta = 2g - \max\Set{b-1, 0}$.
The homology classes form a vector space isomorphic to $\Z_2^{\beta}$ known as the \EMPH{homology
space}.
A \EMPH{homology basis} of~$G$ is a set of~$\beta$ cycles in linearly independent homology
classes, and the \EMPH{minimum homology basis} of~$G$ is the homology basis with either the
minimum number of edges or with minimum total weight if edges of $G$ are weighted.

\begin{figure}[t]
  \centering
    \includegraphics[height=0.8in]{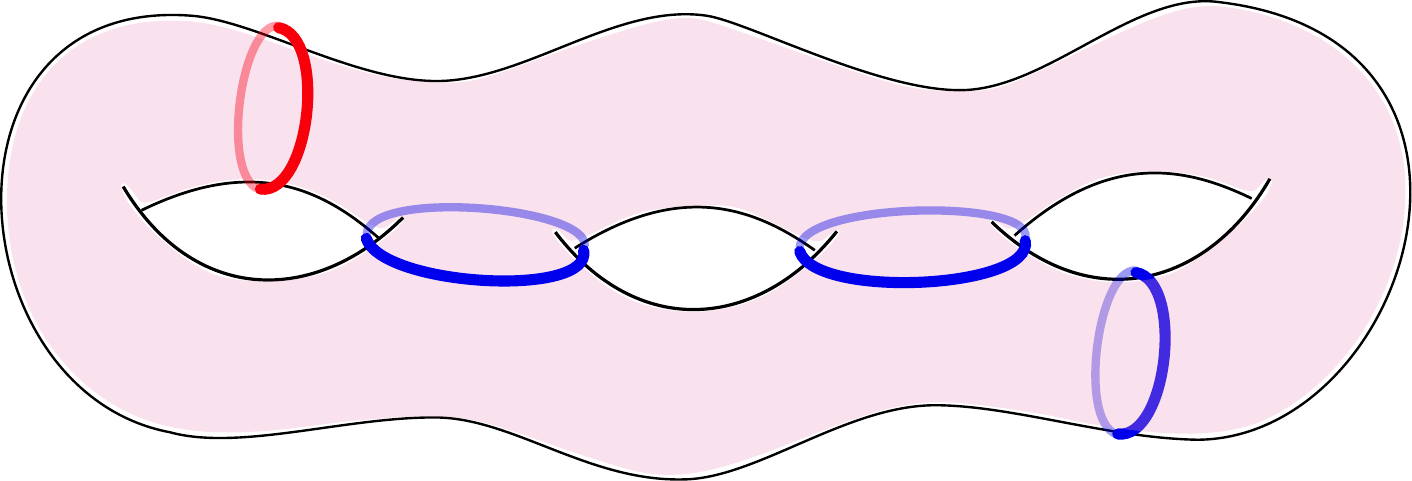}
  \caption{Two homologous cycles, one shown in red and the other in blue.}
  \label{fig:homologous}
\end{figure}

Erickson and Whittlesey~\cite{Erickson05GreedyOptGen} described an~$O(n^2 \log n + gn^2 + g^3 n)$
time algorithm for computing the minimum homology basis for orientable~$\Sigma$ without boundary.
Like Horton~\cite{h-ptafs-87}, they apply the greedy matroid basis algorithm to a set of~$O(n^2)$
candidate cycles.
Alternatively, a set of~$2^{\beta}$ candidate cycles containing the minimum homology basis can be
computed easily by applying the algorithms of Italiano \etal~\cite{insw-iamcmf-11} or Erickson and
Nayyeri~\cite{en-mcsnc-11} for computing the minimum homologous cycle in any specified homology
class.
These algorithms take~$g^{O(g)} n \log \log n$ and~$2^{O(g)} n \log n$ time respectively.
While Erickson and Whittlesey~\cite{Erickson05GreedyOptGen} do not explicitly consider the case,
all three results mentioned above can be extended to surfaces with boundary.
Similarly, the algorithms of Erickson and Whittlesey~\cite{Erickson05GreedyOptGen} and Erickson
and Nayyeri~\cite{en-mcsnc-11} can be applied to compute the minimum homology basis for
non-orientable~$\Sigma$, even though they only consider orientable surfaces explicitly.
Dey, Sun, and Wang~\cite{dsw-acsbf-10} generalized the results above to arbitrary simplicial
complexes, and Busaryev~\etal~\cite{bccdw-ashba-12} improved the running time of their
generalization from~$O(n^4)$ to~$O(n^{\omega} + n^2g^{\omega - 1})$.
Note that all of the algorithms above either take quadratic time in~$n$ (or worse) or
they have exponential dependency on~$g$.
In contrast, it is well understood how to find exactly one cycle of the minimum homology basis
of~$G$ in only~$O(g^2 n \log n)$ time assuming orientable~$\Sigma$, because the minimum weight
non-separating cycle will always be in the basis~\cite{cce-msspe-13,e-sncds-11}.

\subsection{Our results}
We describe new algorithms for computing the minimum cycle basis and minimum homology basis of the
graph~$G$.
Our algorithm for minimum cycle basis requires~$G$ be embedded on an orientable~$\Sigma$, but it
is deterministic and runs in $O(n^\omega + 2^{2g} n^2 + m)$ time, matching the running time of the
randomized algorithm of Amaldi \etal~\cite{AIJMR09} when~$g$ is sufficiently small.
Our algorithm for minimum homology basis works in orientable or non-orientable~$\Sigma$,  is also
deterministic, and it runs in $O((g+b)^3 n \log n + m)$ time assuming shortest paths are unique.
The assumption of unique shortest paths is only necessary to use the multiple-source shortest path
data structure of Cabello, Chambers, and Erickson~\cite{cce-msspe-13}.
It can be avoided with high probability by using randomization or deterministically by increasing
the running time of our algorithm by a factor of~$O(\log n)$~\cite{cce-msspe-13}.  For simplicity,
we will assume shortest paths are unique during the presentation of our minimum homology basis
algorithm.
In any case, ours is the first algorithm for minimum homology basis that has a running time
simultaneously near-linear in~$n$ and polynomial in~$g$.

At a high level, both of our algorithms are based on the~$O(n m^2 + n^2 m \log n)$ time algorithm
of Kavitha \etal~\cite{kmmp-famcb-08} who in turn use an idea of de Pina~\cite{p-aspm-95}.
We compute our basis cycles one by one.
Over the course of the algorithm, we maintain a set of \emph{support vectors} that form the basis
of the subspace that is \emph{orthogonal} to the set of cycles we have already computed.
Every time we compute a new cycle, we find one of minimum weight that is \emph{not} orthogonal
to a chosen support vector~$S$, and then update the remaining support vectors so they remain
orthogonal to our now larger set of cycles.
Using the divide-and-conquer approach of Kavitha \etal~\cite{kmmp-famcb-08}, we are able to
maintain these support vectors in only~$O(n^\omega)$ time total in our minimum cycle basis
algorithm and~$O(g^\omega)$ time total in our minimum homology basis algorithm.
Our approaches for picking the minimum weight cycle not orthogonal to~$S$ form the more
technically interesting parts of our algorithms and are unique to this work.

For our minimum cycle basis algorithm, we compute a collection of~$O(2^{2g}n)$ cycles that contain
the minimum cycle basis and then partition these cycles according to their homology classes.
The cycles within a single homology class nest in a similar fashion to the minimum cycle basis
cycles of a planar graph.
Every time we compute a new cycle for our minimum cycle basis, we walk up the $2^{2g}$ trees of
nested cycles and find the minimum weight cycle not orthogonal to~$S$ in~$O(n)$ time per tree.
Overall, we spend~$O(2^{2g} n^2)$ time finding these cycles;
if any improvement is made on the time it takes to update the support vectors, then the running
time of our algorithm as a whole will improve as well.

Our minimum homology basis algorithm uses a covering space called the cyclic double cover.
As shown by Erickson~\cite{e-sncds-11}, the cyclic double cover provides a convenient way to find
a minimum weight closed walk~$\gamma$ crossing an arbitrary non-separating cycle~$\lambda$ an odd
number of times.
We extend his construction so that we may consider not just one~$\lambda$ but any arbitrarily
large collection of cycles.
Every time we compute a new cycle in our minimum homology basis algorithm, we let~$S$ determine a
set of cycles that must be crossed an odd number of times, build the cyclic double cover for that
set, and then compute our homology basis cycle in~$O((g+b) g n \log n)$ time by computing minimum
weight paths in the covering space%
\footnote{In addition to the above results, we note that it is possible to improve the~$g^{O(g)} n
  \log \log n$ time algorithm for minimum homology basis based on Italiano
  \etal~\cite{insw-iamcmf-11} so that it runs in $2^{O(g)} n \log \log n$ time.
  However, this improvement is a trivial adaption of techniques used by Fox~\cite{f-sncdu-13} to
  get a $2^{O(g)} n \log \log n$ time algorithm for minimum weight non-separating and
  non-contractible cycle in undirected graphs.
We will not further discuss this improvement in our paper.}.

The rest of the paper is organized as follows.
We provide more preliminary material on surface embedded graphs in Section~\ref{sec:prelims}.
In Section~\ref{sec:signatures}, we describe a characterization of cycles and homology classes
using binary vectors.
These vectors are helpful in formally defining our support vectors.
We give a high level overview of our minimum cycle basis algorithm in
Section~\ref{sec:cycle_basis} and describe how to pick individual cycles in
Section~\ref{sec:selecting-cycles}.
Finally, we give our minimum homology basis algorithm in Section~\ref{sec:homology_basis}.

\section{Preliminaries}
\label{sec:prelims}
We begin with an overview of graph embeddings on surfaces.
For more background, we refer readers to books and surveys on topology~\cite{h-at-02,m-t-00},
computational topology~\cite{eh-cti-10,z-tc-05}, and graphs on surfaces~\cite{c-tags-12,mt-gs-01}.

A \EMPH{surface} or 2-manifold with boundary~$\Sigma$ is a compact Hausdorff space in which every
point lies in an open neighborhood homeomorphic to the Euclidean plane or the closed half plane.
The \EMPH{boundary} of the surface is the set of all points whose open neighborhoods are
homeomorphic to the closed half plane.
Every boundary component is homeomorphic to the circle.
A \EMPH{cycle} in the surface~$\Sigma$ is a continuous function $\gamma : S^1 \to \Sigma$,
where~$S^1$ is the unit circle.
Cycle~$\gamma$ is called \EMPH{simple} if~$\gamma$ is injective.
A \EMPH{path} in surface~$\Sigma$ is a continuous function $p : [0,1] \to \Sigma$; again, path~$p$
is simple if~$p$ is injective.
A \EMPH{loop} is a path $p$ such that $p(0) = p(1)$; equivalently, it is a cycle with a designated
basepoint.
The \EMPH{genus} of the surface~$\Sigma$, denoted by~$g$, is the maximum number of disjoint simple
cycles~$\gamma_1, \dots, \gamma_g$ in~$\Sigma$ such that $\Sigma \setminus (\gamma_1 \cup \dots
\cup \gamma_g)$ is connected.
Surface~$\Sigma$ is \EMPH{non-orientable} if it contains a subset homeomorphic to the M\"{o}bius
band; otherwise, it is \EMPH{orientable}.

The \EMPH{embedding} of graph~$G = (V,E)$ is a drawing of~$G$ on~$\Sigma$ which maps vertices to
distinct points on~$\Sigma$ and edges to internally disjoint simple paths whose endpoints lie on
their incident vertices' points.
A \EMPH{face} of the embedding is a maximally connected subset of~$\Sigma$ that does not intersect
the image of~$G$.
An embedding is \EMPH{cellular} if every face is homeomorphic to an open disc; in particular,
every boundary component must be covered by (the image of) a cycle in~$G$.
These boundary cycles must be vertex-disjoint.
Embeddings can be described combinatorially using a rotational system and a signature.
The \EMPH{rotation system} describes the cyclic ordering of edges around each vertex.
The \EMPH{orientation signature} $\sig : E \to \Set{0,1}$ is a function that assigns to each
edge~$e$ a bit.
Value $\sig(e) = 0$ if the cyclic ordering of $e$'s endpoints are in the same direction;
otherwise, $\sig(e) = 1$.
Abusing notation, we denote the orientation signature of a cycle~$\eta$ (in~$G$) as~$\sig(\eta)$
and define it as the exclusive-or of its edges' orientation signatures.
If~$\sig(\eta) = 1$, we say~$\eta$ is \EMPH{one-sided}.
Otherwise, we say $\eta$ is \EMPH{two-sided}.
Surface~$\Sigma$ is orientable if and only if every cycle of~$G$ is two-sided.

Let~$F$ be the set of faces in~$G$.
Let~$n$, $m$, $\ell$, and $b$ be the number of vertices, edges, faces, and boundary components
of~$G$'s embedding respectively.
The \EMPH{Euler characteristic~$\chi$} of~$\Sigma$ is $2-2g - b$ if~$\Sigma$ is orientable and is
$2 - g - b$ otherwise.
By Euler's formula, $\chi = n - m + \ell$.
Embedded graphs can be \EMPH{dualized}: $G^*$ is the graph embedded on the same surface, with a
vertex in $G^*$ for every face \emph{and boundary component} in $G$ and a face in $G^*$ for every
vertex of $G$.
We refer to the dual vertices of boundary components as \EMPH{boundary dual vertices}.
Two vertices in $G^*$ are adjacent if the corresponding faces/boundary components are separated by
an edge in $G$.
We generally do not distinguish between edges in the primal and dual graphs.
We assume~$\Sigma$ contains at least one boundary component as it does not affect the
homology of~$\Sigma$ to remove a face when there are no boundary to begin with.
In particular, this assumptions simplifies the definition of~$\beta$ as given in the introduction
so $\beta = g-b+1$ if~$\Sigma$ is non-orientable and~$\beta = 2g-b+1$ otherwise.

A \EMPH{spanning tree} of the graph $G$ is a subset of edges of $G$ which form a tree containing
every vertex.
A \EMPH{spanning coforest} is a subset of edges which form a forest in the dual graph with exactly
$b$ components, each containing one dual boundary vertex.
A \EMPH{tree-coforest decomposition} of $G$ is a partition of $G$ into 3 edge disjoint subsets,
$(T,L,C)$, where $T$ is a spanning tree of $G$, $C$ is a spanning coforest, and $L$ is the set of
leftover edges $E \setminus (T \cup C)$~\cite{e-dgteg-03,en-mcsnc-11}.
Euler's formula implies $\abs{L} = \beta$.

A \EMPH{$w,w'$-path} $p$ (in~$G$) is an ordered sequence of edges $\Set{u_1 v_1, u_2
v_2,\dots,u_{k}v_{k}}$ where $w=u_1$, $w'=v_k$, and~$v_i = u_{i+1}$ for all positive $i < k$; a
\EMPH{closed path} is a path which starts and ends on the same vertex.
A path is \EMPH{simple} if it repeats no vertices (except possibly the first and last).
We sometimes use \EMPH{simple cycle} to mean a simple closed path.
A path in the dual graph~$G^*$ is referred to as a \EMPH{co-path} and a cycle in the dual is
referred to as a \EMPH{co-cycle}.
Simple paths and cycles in the dual are referred to as simple co-paths and co-cycles respectively.
Every member of the minimum cycle basis (and subsequently the minimum homology basis) is a simple
cycle~\cite{h-ptafs-87}.
We let~$\sigma(u,v)$ denote an arbitrary shortest (minimum weight) $u,v$-path in~$G$.
Let~$p[u,v]$ denote the \EMPH{subpath} of~$p$ from~$u$ to~$v$.
Given a $u,v$-path $p$ and a $v,w$-path $p'$, let $p \cdot p'$ denote their concatenation.
Two paths $p$ and $p'$ \EMPH{cross} if their embeddings in~$\Sigma$ cannot be be made disjoint
through infinitesimal perturbations; more formally, they cross if there is a maximal (possibly
trivial) common subpath $p''$ of $p$ and $p'$ such that, upon contracting $p''$ to a vertex $v$,
two edges each of $p$ and $p'$ alternate in their embedded around $v$.
Two closed paths cross if they have subpaths which cross.

Let $\gamma$ be a closed path in~$G$ that does not cross itself.
We define the operation of \EMPH{cutting} along $\gamma$ and denote it $G \snip \gamma$.
Graph $G \snip \gamma$ is obtained by cutting along $\gamma$ in the drawing of $G$ on the
surface, creating two copies of $\gamma$.
If~$\sig(\gamma) = 0$, then the two copies of $\gamma$ each form boundary components in the cut
open surface.
Otherwise, the two copies of $\gamma$ together form a single closed path that is the concatenation
of $\gamma$ to itself at both ends; the single closed path forms a single boundary component.
Likewise, given a simple path~$\sigma$ in~$G$, we obtain the graph $G \snip \sigma$ by cutting
along $\sigma$, creating two interiorly disjoint copies connected at their endpoints.
The cut open surface has one new boundary component bounded by the copies of~$\sigma$.

Let~$F'$ be a collection of faces and boundary components.
Let~$\partial F'$ denote the boundary of~$F'$, the set of edges with exactly one incident face or
boundary component in~$F'$.
We sometimes call~$F'$ a \EMPH{cut} of $G^*$ and say $\partial F'$ \EMPH{spans} the cut.
A co-path $p$ with edge $uv \in \partial F'$ \EMPH{crosses} the cut at $uv$.

Finally, let~$w$ and~$w'$ be two bit-vectors of the same length.
We let~$\langle w , w' \rangle$ denote the \EMPH{dot product} of $w$ and $w'$, defined by the
exclusive-or of the products of their corresponding bits.

\subsection{Sparsifying~$G$}
We assume~$g = O(n^{1-\varepsilon})$ for some constant $\varepsilon > 0$; otherwise, our
algorithms offer no improvement over previously known results.
Because boundary cycles are vertex-disjoint, we also have~$b = O(n)$.
Here, we describe a procedure to preprocess~$G$ so it contains a linear (in~$n$) number of edges
and faces as well.
Namely, we modify~$G$ so it contains no faces of degree $2$ or $1$; Euler's formula immediately
limits the number of edges and faces in the modified graph to~$O(n)$.
We begin by computing all pairs shortest paths.
Observe, if we simply remove parallel edges from~$G$, then we only reduce the minimum genus of any
surface into which we can embed~$G$.
In particular, the graph would have~$O(n)$ edges left after removing all parallel edges.
All pairs shortest paths can be computed in $O(n^2 \log n + m)$ time.

Now, consider the graph~$G$ as given in the input.
We iteratively perform the following procedure until every face has degree $3$ or greater or our
graph is one of a constant number of easy cases.
In each iteration, we add at most one cycle to the minimum cycle basis or minimum homology basis.
Let~$f$ be a face of degree~$1$ or~$2$.
If~$f$ has degree~$1$, then it is bounded by a null-homologous loop~$e$ in~$G$.
We add $\{e\}$ to the minimum cycle basis, because it is the cheapest cycle containing~$e$, but we
do not add it to the minimum homology basis.
If~$G$ consists entirely of~$e$, we terminate; otherwise we remove~$e$ and~$f$ from the graph and
continue with the next iteration.
If~$f$ has degree~$2$, then it is either bounded by two distinct edges~$e$ and~$e'$ or bounded
twice by a single edge.
In the latter case, graph~$G$ must be the path of length~$1$ embedded in the sphere or it is a
single vertex and non-null homologous loop embedded in the projective plane (the non-orientable
surface of genus~$1$).
If it is the path in the sphere, we add nothing to the minimum cycle basis and minimum homology
basis, and we terminate.
If it is a loop in the projective plane, we add it to both the minimum cycle basis and the minimum
homology basis and terminate.
Now suppose~$f$ is bounded by distinct faces~$e$ and~$e'$, and let~$e$ have less weight than~$e'$
without loss of generality.
Edge~$e'$ belongs to cycle~$\{e,e'\}$, so it belongs to some cycle of the minimum cycle basis.
Let~$\sigma$ be the shortest path between the endpoints of~$e'$.
We add~$\sigma \cdot e'$ to the minimum cycle basis.
No other cycle in the minimum cycle basis contains~$e'$, because it would always be at least as
cheap to include~$e$ in the cycle instead.
Also, no cycle of the minimum homology basis contains~$e'$, because it would always be at least as
cheap to include~$e$ and $\{e,e'\}$ itself is null-homologous.
We remove~$e'$ and~$f$ from the graph and continue with the next iteration.

Each iteration is done in constant time, and there are at most~$m$ iterations of the above
algorithm.
Therefore, the preprocessing procedure takes $O(n^2 \log n + m)$ time total.
We assume for the rest of that paper that $m = O(n)$ and $\ell = O(n)$.

\section{Cycle and Homology Signatures}
\label{sec:signatures}

We begin the presentation of our algorithms by giving a characterization of cycles and homology
classes using binary vectors.
These vectors will be useful in helping us determine which cycles can be safely added to our
minimum cycle and homology bases.
Let~$(T, L, C)$ be an arbitrary tree, coforest decomposition of~$G$;
set~$L$ contains exactly~$\beta$ edges $e_1, \dots, e_{\beta}$.
For each index~$i \in \Set{1, \dots, \beta}$, graph~$C \cup \Set{e_i}$ contains a unique simple
co-cycle or a unique simple co-path between distinct dual boundary vertices.
Let~$p_i$ denote this simple co-cycle or co-path.
Let~$f_{\beta+1}, \dots, f_{m - n + 1}$ denote the $m - n + 1 - \beta = \ell$ faces of~$G$, and
for each index $i \in \Set{\beta+1, \dots, m - n + 1}$, let~$p_i$ denote the simple co-path
from~$f_i$ to the dual boundary vertex in $f_i$'s component of~$C$.

For each edge~$e$ in~$G$, we define its \EMPH{cycle signature}~$[e]$ as an $(m - n + 1)$-bit
vector whose $i$th bit is equal to~$1$ if and only if $e$ appears in~$p_i$.
The cycle signature~$[\eta]$ of any cycle~$\eta$ is the bitwise exclusive-or of the
signatures of its edges.
Equivalently, the $i$th bit of~$[\eta]$ is~$1$ if and only if~$\eta$ and~$p_i$ share an odd number
of edges.
Similarly, for each edge~$e$ in~$G$, we define its \EMPH{homology signature}~$[e]_h$ as a
$\beta$-bit vector whose $i$th bit is equal to~$1$ if and only if $e$ appears in~$p_i$.
The homology signature of cycles is defined similarly.

The following lemma is immediate.
\begin{lemma}
  \label{lem:signature_sums}
  Let~$\eta$ and~$\eta'$ be two cycles.
  We have $[\eta \oplus \eta'] = [\eta] \oplus [\eta']$ and $[\eta \oplus \eta']_h = [\eta]_h \oplus
  [\eta']_h$.
\end{lemma}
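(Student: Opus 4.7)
The plan is to prove both identities in a single stroke by observing that the signature map $\eta \mapsto [\eta]$ is, by construction, just the $\mathbb{Z}_2$-linear extension of the edge-signature map $e \mapsto [e]$ from single edges to arbitrary subsets of $E$ written as XOR combinations of edges. Since the group operation on cycles is edge-wise symmetric difference (i.e., addition in $\mathbb{Z}_2^E$), linearity gives the result for free.

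More concretely, I would write $[\eta] = \bigoplus_{e \in \eta} [e]$ and $[\eta'] = \bigoplus_{e \in \eta'} [e]$, directly from the definition of the cycle signature of a cycle as the XOR of the signatures of its edges. XORing these two vectors gives
\[
  [\eta] \oplus [\eta'] \;=\; \bigoplus_{e \in \eta} [e] \;\oplus\; \bigoplus_{e \in \eta'} [e].
\]
Every edge $e \in \eta \cap \eta'$ contributes its signature $[e]$ twice in the right-hand side and therefore cancels, while every edge appearing in exactly one of $\eta, \eta'$ contributes exactly once. The edges contributing once are precisely the edges of the symmetric difference $\eta \oplus \eta'$, so the right-hand side equals $\bigoplus_{e \in \eta \oplus \eta'} [e]$, which by definition is $[\eta \oplus \eta']$. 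This establishes the first identity.

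The second identity, $[\eta \oplus \eta']_h = [\eta]_h \oplus [\eta']_h$, is proved by exactly the same argument: the homology signature is defined as the XOR of edge homology signatures using the same co-paths $p_1, \dots, p_\beta$ (just truncated to the first $\beta$ coordinates), so linearity in $\mathbb{Z}_2^E$ transfers verbatim. There is no real obstacle here — the statement is essentially a definition unpacking, and the only thing to be careful about is making clear that $\eta \oplus \eta'$ is still a cycle (each vertex is incident to an even number of edges in the symmetric difference of two such sets), so that the signature of the left-hand side is well-defined as the signature of a cycle.
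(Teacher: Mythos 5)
Your proof is correct and is exactly the definition-unpacking argument the paper has in mind; the paper simply declares the lemma ``immediate'' and omits the proof. Nothing further is needed.
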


Let~$\zeta_i$ denote the unique simple cycle in~$T \cup \{e_i\}$.
The following lemma helps us explain the properties of cycle and homology signatures.
\begin{lemma}
  \label{lem:easy_basis}
  The set of cycles~$\Set{\zeta_1, \dots, \zeta_\beta}$ form a homology basis.
\end{lemma}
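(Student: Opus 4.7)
The plan is to establish linear independence of the $\zeta_i$ in homology; since there are $\beta$ of them and the homology space has dimension $\beta$, this suffices. The strategy has two parts: first compute the homology signatures $[\zeta_i]_h$ and show they are the standard basis vectors of $\Z_2^\beta$, and second show that the homology signature of any null-homologous cycle is zero. Combined with Lemma~\ref{lem:signature_sums}, these imply the $\zeta_i$ are independent in homology.

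For the first part, recall $\zeta_i$ consists of $e_i$ together with tree edges from $T$. Every edge $e \in T$ lies in neither $C$ nor $L$, so $e \notin p_j = (C \cup \{e_j\})$-edges for any $j \in \{1,\ldots,\beta\}$; hence $[e]_h = 0$ for $e \in T$. Meanwhile, for the leftover edge $e_i \in L$, the only $p_j$ that can contain $e_i$ is the one with $j = i$ (since $e_i \notin C$ and $p_j \subseteq C \cup \{e_j\}$), and $p_i$ does contain $e_i$ by construction. Thus $[e_i]_h$ is the $i$th standard basis vector, and by Lemma~\ref{lem:signature_sums}, $[\zeta_i]_h = [e_i]_h$ is as well.

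For the second part, suppose $\eta = \partial F'$ for some subset $F'$ of faces. I want to show that $\eta$ shares an even number of edges with each $p_j$. Interpret $p_j$ as a walk in $G^*$ whose vertices are faces and boundary dual vertices. Each time $p_j$ traverses an edge of $\partial F'$, it moves between a vertex inside $F'$ and one outside $F'$. If $p_j$ is a simple co-cycle (the case when $C \cup \{e_j\}$ contains a co-cycle), then these crossings must balance, giving an even count. If $p_j$ is a co-path between two boundary dual vertices, both endpoints of the walk lie outside $F'$ (since $F'$ contains no boundary components), so again the walk crosses $\partial F'$ an even number of times. Either way, the $j$th bit of $[\eta]_h$ is $0$.

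Assembling the pieces: if $I \subseteq \{1,\ldots,\beta\}$ and $\bigoplus_{i \in I} \zeta_i$ is null-homologous, then by the second part its signature is zero, but by the first part its signature equals $\bigoplus_{i \in I} [\zeta_i]_h$, the characteristic vector of $I$. Hence $I = \emptyset$, so the $\zeta_i$ are homologically independent and therefore form a basis. The main obstacle is the parity argument in the second part: making precise that a cellular boundary $\partial F'$ crosses any dual co-cycle or boundary-to-boundary co-path an even number of times, which is where the hypothesis that $F'$ contains no boundary components is essential.
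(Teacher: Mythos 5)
Your proof is correct and follows essentially the same route as the paper's: both hinge on the observation that $p_i$ meets $\zeta_j$ in an odd number of edges exactly when $i=j$, and that a co-cycle or a co-path between two boundary dual vertices must cross the cut $\partial F'$ an even number of times because $F'$ contains only faces. The only difference is cosmetic — you package the argument as ``signatures of the $\zeta_i$ are standard basis vectors, null-homologous cycles have zero signature,'' whereas the paper derives the contradiction directly from the odd crossing count — but the underlying parity argument is identical.
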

\begin{proof}
  We prove that the cycles lie in independent homology classes by showing the symmetric difference
  of any non-empty subset of~$\Set{\zeta_1, \dots, \zeta_\beta}$ is not null-homologous.
  Suppose to the contrary that there exists a non-empty $\Upsilon \subseteq \Set{\zeta_1, \dots,
  \zeta_\beta}$ such that $\bigoplus_{\eta \in \Upsilon} \eta = \partial F'$ for some subset of
  faces~$F' \subseteq F$, where~$\bigoplus$ is the symmetric difference of its operands.
  Let~$\zeta_i \in \Upsilon$ be an arbitrary member of the subset.
  Co-path~$p_i$ shares exactly one edge with~$\zeta_i$, and it shares no edges with any
  other~$\eta \in \Upsilon$.
  In particular,~$p_i$ crosses dual cut~$F'$ an odd number of times.
  Therefore,~$p_i$ cannot be a co-cycle.
  Further,~$p_i$ cannot be a co-path between two distinct dual boundary vertices, because exactly
  one of those two vertices would have to lie inside~$F'$, a contradiction on~$F'$ only containing
  faces.
  We conclude~$\Upsilon$ cannot exist and the cycles~$\Set{\zeta_1, \dots, \zeta_\beta}$ do lie in
  independent homology classes.
\end{proof}
Let~$w$ be an arbitrary $(m - n + 1)$-bit vector.
We construct a cycle~$\eta_w$ to demonstrate how cycle and homology signatures provide a
convenient way to distinguish between cycles and their homology classes.
Let~$\Upsilon \subseteq \Set{\zeta_1, \dots, \zeta_\beta}$ be the subset of basis cycles
containing exactly the cycles~$\zeta_i$ such that the $i$th bit of~$w$ is equal to~$1$.
Similarly, let~$F' \subseteq F$ be the subset of faces such that face~$f_i \in F'$ if and only
the $i$th bit of~$w$ is equal to~$1$.
Let~$\eta_w = \bigoplus_{\eta \in (\Upsilon \cup \{\partial F'\})} \eta$.
\begin{lemma}
  \label{lem:bijection}
  We have~$[\eta_w] = w$.
\end{lemma}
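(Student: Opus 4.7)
The plan is to compute $[\eta_w]$ bit by bit by exploiting Lemma~\ref{lem:signature_sums} together with two auxiliary calculations: one for $[\zeta_i]$ and one for $[\partial F']$. By Lemma~\ref{lem:signature_sums},
\[
  [\eta_w] \;=\; \Bigl(\bigoplus_{i:\, w_i = 1,\; i \le \beta} [\zeta_i]\Bigr) \,\oplus\, [\partial F'],
\]
so it suffices to describe these summands and verify that they reassemble into~$w$.

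First I would show that for each $i \in \{1,\dots,\beta\}$ the vector $[\zeta_i]$ is the $i$th standard basis vector of length $m-n+1$. The point is that $T$, $C$, and $L$ are pairwise edge-disjoint, so the fundamental cycle $\zeta_i \subseteq T \cup \{e_i\}$ and the co-path $p_j$ can share at most the unique $L$-edge that each contains. For $j \le \beta$ this forces $e_i = e_j$ as the only possible common edge, giving the $i$th bit, and for $j > \beta$ the co-path $p_j$ lies entirely in $C$ and so shares no edge with $\zeta_i$. Hence the first contribution to $[\eta_w]$ is exactly the vector with a~$1$ in coordinate~$i$ precisely when $i \le \beta$ and $w_i = 1$, and $0$ in coordinates $j > \beta$.

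Next I would compute $[\partial F']$ by a crossing-parity argument. For $j > \beta$, $p_j$ is a simple co-path from the face $f_j$ to a boundary dual vertex. Since $F'$ contains only actual faces, the boundary dual vertex lies outside~$F'$, so $p_j$ crosses $\partial F'$ an odd number of times iff $f_j \in F'$, i.e.\ iff $w_j = 1$. For $j \le \beta$, $p_j$ is either a simple co-cycle (so its endpoints coincide and it trivially crosses any cut of $G^*$ an even number of times) or a simple co-path whose two endpoints are both boundary dual vertices and hence both lie outside~$F'$; either way $p_j$ crosses $\partial F'$ an even number of times. Thus $[\partial F']$ has $0$ in coordinates $j \le \beta$ and $w_j$ in coordinates $j > \beta$.

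XORing the two contributions gives a vector whose $i$th bit equals $w_i$ for $i \le \beta$ (contributed by the $\zeta_i$ terms) and whose $j$th bit equals $w_j$ for $j > \beta$ (contributed by $[\partial F']$), which is precisely~$w$. The only mildly delicate step is the case split for $j \le \beta$ in the $[\partial F']$ computation, where I need to use the distinction between simple co-cycles and co-paths between distinct boundary dual vertices from the definition of $p_i$; this is the same case distinction that powered Lemma~\ref{lem:easy_basis}, so no new ideas are needed.
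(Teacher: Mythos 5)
Your proof is correct and follows essentially the same route as the paper's: both arguments reduce to the two parity counts that $p_j$ meets $\zeta_i$ in exactly the edge $e_i$ when $i=j$ and in nothing otherwise, and that $p_j$ crosses the cut $\partial F'$ an odd number of times exactly when $j>\beta$ and $f_j\in F'$. Your reorganization via Lemma~\ref{lem:signature_sums} into the summands $[\zeta_i]$ and $[\partial F']$ is just a cleaner packaging of the same bit-by-bit computation, and the edge-disjointness justification you give for $[\zeta_i]$ being the $i$th standard basis vector is a detail the paper leaves implicit ("by construction").
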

\begin{proof}
  Let~$i \in \Set{1,\dots,m - n + 1}$, and let~$p_i$ be the co-path as defined above.
  Suppose~$i \in \Set{1,\dots,\beta}$.
  Co-path~$p_i$ crosses cut~$F'$ an even number of times.
  If bit~$i$ in~$w$ is set to~$1$, then $p_i$ shares exactly one edge of~$\bigoplus_{\eta \in
  \Upsilon} \eta$ by construction, and it must share an odd number of edges with~$\eta_w$ as well.
  If bit~$i$ in~$w$ is set to~$0$, then $p_i$ shares no edges with $\bigoplus_{\eta \in \Upsilon}
  \eta$, and it must share an even number of edges with~$\eta_w$.

  Now, suppose~$i \in \Set{\beta+1,\dots,m-n+1}$.
  Co-path~$p_i$ shares no edges with $\bigoplus_{\eta \in \Upsilon} \eta$.
  If~$i$ is set to~$1$, then~$f_i \in F'$ and~$p_i$ crosses cut~$F'$ an odd number of times.
  Therefore, it shares an odd number of edges with~$\eta_w$.
  If~$i$ is set to~$0$, then~$f_i \notin F'$, and~$p_i$ crosses cut~$F'$ an even number of times,
  sharing an even number of edges with~$\eta_w$.
\end{proof}

\begin{corollary}
  \label{cor:distinct_cycle_signatures}
  Let~$\eta$ and~$\eta'$ be two cycles.
  We have~$\eta = \eta'$ if and only if~$[\eta] = [\eta']$.
\end{corollary}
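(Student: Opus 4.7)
The forward direction is immediate from the definition of the cycle signature map, so the real content is the converse. My plan is to establish that the signature map $\eta \mapsto [\eta]$, regarded as a map from the cycle space of $G$ to $\Z_2^{m-n+1}$, is a vector space isomorphism; the corollary then follows because distinct cycles lie in distinct equivalence classes of this map.

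First, I would observe that Lemma~\ref{lem:signature_sums} says exactly that the signature map is $\Z_2$-linear: $[\eta \oplus \eta'] = [\eta] \oplus [\eta']$. In particular, $[\eta] = [\eta']$ is equivalent to $[\eta \oplus \eta'] = 0$, so after replacing $\eta \oplus \eta'$ by $\gamma$, it suffices to prove that any cycle $\gamma$ with $[\gamma] = 0$ must equal the empty edge set. Second, recall that the cycle space of $G$ is a $\Z_2$-vector space of dimension $m - n + 1$ (spanned, e.g., by the fundamental cycles of any spanning tree), which is the same as the dimension of the codomain $\Z_2^{m-n+1}$.

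Third, Lemma~\ref{lem:bijection} gives, for every bit vector $w \in \Z_2^{m-n+1}$, an explicit cycle $\eta_w$ whose signature is $w$. Thus the signature map is surjective. A surjective linear map between finite-dimensional vector spaces of the same dimension is an isomorphism, so its kernel is trivial. In particular, the only cycle with zero signature is the empty set, which completes the argument.

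I do not anticipate a substantive obstacle: the only thing to be careful about is recalling that ``cycle'' in this paper means an arbitrary element of the cycle space (any edge subset with all vertex degrees even), so that the cycle space genuinely has dimension $m - n + 1$ and the linear algebra argument applies verbatim. If one wanted a more direct argument avoiding the dimension count, one could instead verify that the specific cycle $\eta_w$ constructed in the proof of Lemma~\ref{lem:bijection} with $w = [\gamma]$ must coincide with $\gamma$ (both have the same signature and the construction of $\eta_w$ is the unique one compatible with a chosen tree-coforest decomposition), but the dimension argument is cleaner.
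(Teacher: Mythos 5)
Your proof is correct and follows the route the paper intends: the corollary is presented as an immediate consequence of Lemma~\ref{lem:signature_sums} (linearity) and Lemma~\ref{lem:bijection} (surjectivity), with injectivity following from the dimension count on the cycle space, exactly as you argue.
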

Observe that the homology class of~$\eta_w$ is entirely determined by the first~$\beta$ bits
of~$w$.
We immediately obtain an alternative (and more combinatorially inspired) proof of the following
corollary of Erickson and Nayyeri~\cite{en-mcsnc-11}.
\begin{corollary}[Erickson and Nayyeri~{\cite[Corollary 3.3]{en-mcsnc-11}}]
  \label{cor:distinct_homology_signatures}
  Two cycles~$\eta$ and $\eta'$ are homologous if and only if~$[\eta]_h = [\eta']_h$.
\end{corollary}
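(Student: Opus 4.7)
The plan is to reduce the homologous--signature equivalence for pairs to a null-homologous--zero-signature equivalence for a single cycle, and then to prove that special case using the tools already in hand, especially Lemma~\ref{lem:bijection} and Corollary~\ref{cor:distinct_cycle_signatures}. By the definition of homologous cycles, $\eta$ and $\eta'$ are homologous if and only if $\eta \oplus \eta'$ is null-homologous; and by Lemma~\ref{lem:signature_sums}, $[\eta]_h = [\eta']_h$ if and only if $[\eta \oplus \eta']_h = 0$. So it suffices to show that an arbitrary cycle $\gamma$ is null-homologous if and only if $[\gamma]_h = 0$.

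For the forward direction, suppose $\gamma = \partial F'$ for some $F' \subseteq F$. Fix any $i \in \Set{1,\dots,\beta}$. If $p_i$ is a co-cycle, then it crosses the dual cut $F'$ an even number of times and thus shares an even number of edges with $\gamma$. If instead $p_i$ is a co-path between two distinct dual boundary vertices, then both of its endpoints lie outside $F'$ (since $F'$ contains only faces, not boundary dual vertices), so $p_i$ again crosses $F'$ an even number of times. In either case the $i$th bit of $[\gamma]_h$ is $0$, so $[\gamma]_h = 0$.

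For the converse, suppose $[\gamma]_h = 0$. Apply the construction preceding Lemma~\ref{lem:bijection} to the $(m-n+1)$-bit vector $w := [\gamma]$. Because the first $\beta$ bits of $w$ coincide with $[\gamma]_h$ and hence are all zero, the subset $\Upsilon \subseteq \Set{\zeta_1,\dots,\zeta_\beta}$ appearing in the construction is empty, and $\eta_w = \partial F'$ for the corresponding subset of faces $F'$. Lemma~\ref{lem:bijection} gives $[\eta_w] = w = [\gamma]$, and Corollary~\ref{cor:distinct_cycle_signatures} then yields $\gamma = \eta_w = \partial F'$, so $\gamma$ is null-homologous.

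The only genuine content lies in the forward direction, where one must observe both that co-cycles cross any dual cut with even parity and that co-paths between distinct boundary dual vertices also cross any cut spanned by a set of (interior) faces an even number of times; the converse is essentially a bookkeeping reduction to the already-proved Lemma~\ref{lem:bijection} and Corollary~\ref{cor:distinct_cycle_signatures}. This is also exactly the place where the convention that $F'$ contains only faces (and not boundary dual vertices) is doing the work.
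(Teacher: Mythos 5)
Your proof is correct and follows essentially the same route the paper intends: the paper derives this corollary immediately from the observation that the homology class of $\eta_w$ depends only on the first $\beta$ bits of $w$, together with Lemma~\ref{lem:bijection}, Corollary~\ref{cor:distinct_cycle_signatures}, and Lemma~\ref{lem:easy_basis}. Your forward direction simply inlines the parity argument from the proof of Lemma~\ref{lem:easy_basis} (co-cycles and co-paths with both endpoints outside $F'$ cross the cut evenly) rather than citing that lemma, and your converse is exactly the paper's reduction through $\eta_w$.
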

\begin{corollary}
  \label{cor:signature_isomorphisms}
  Cycle signatures are an isomorphism between the cycle space and~$\Z_2^{m - n + 1}$, and
  homology signatures are an isomorphism between the first homology space and~$\Z_2^{2g}$.
\end{corollary}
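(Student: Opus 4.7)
The plan is to verify the standard three properties---linearity, injectivity, and surjectivity---for each of the two signature maps, with each property essentially already in hand from one of the preceding results.

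For the cycle signature map $[\,\cdot\,]$ from the cycle space to $\Z_2^{m-n+1}$, I would proceed as follows. Linearity is exactly the statement of Lemma~\ref{lem:signature_sums}: $[\eta \oplus \eta'] = [\eta] \oplus [\eta']$, so $[\,\cdot\,]$ is a $\Z_2$-linear map. Injectivity is Corollary~\ref{cor:distinct_cycle_signatures}, which says that distinct cycles have distinct signatures; equivalently, the kernel is trivial. Surjectivity is Lemma~\ref{lem:bijection}: for every $w \in \Z_2^{m-n+1}$ the explicitly constructed cycle $\eta_w$ satisfies $[\eta_w] = w$. These three facts together give the claimed isomorphism; the fact that both spaces have dimension $m - n + 1$ (the cycle space is isomorphic to $\Z_2^{m-n+1}$ as noted in the introduction) is consistent and serves only as a sanity check.

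For the homology signature map, I would observe that $[\,\cdot\,]_h$ is simply the composition of $[\,\cdot\,]$ with the projection onto the first $\beta$ coordinates, so linearity again follows from Lemma~\ref{lem:signature_sums}. To make the map descend to the first homology space $H_1$, one uses Corollary~\ref{cor:distinct_homology_signatures}, which says that homologous cycles share the same homology signature; hence $[\,\cdot\,]_h$ factors through $H_1$. The induced map on $H_1$ is injective by the other direction of the same corollary: if $[\eta]_h = [\eta']_h$, then $\eta$ and $\eta'$ represent the same homology class. For surjectivity, given any target $w \in \Z_2^{\beta}$, I pad $w$ with zeros in the remaining $\ell$ coordinates to obtain $\tilde{w} \in \Z_2^{m-n+1}$ and invoke Lemma~\ref{lem:bijection} to produce $\eta_{\tilde w}$ with $[\eta_{\tilde w}] = \tilde w$; projecting onto the first $\beta$ bits shows $[\eta_{\tilde w}]_h = w$.

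There is no real obstacle here: every ingredient has already been established. The only bit of bookkeeping is to be clear about the codomain in the statement---the corollary writes $\Z_2^{2g}$, which matches $\beta = 2g - b + 1$ only when $b = 1$; in general the argument gives an isomorphism onto $\Z_2^{\beta}$, and the $\Z_2^{2g}$ in the statement should be read with this convention in mind. Since the map descends to $H_1$, is injective on $H_1$, is surjective onto $\Z_2^{\beta}$, and is $\Z_2$-linear, it is the claimed isomorphism.
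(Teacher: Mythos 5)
Your proof is correct and is exactly the argument the paper intends: the corollary is stated without proof because it follows immediately from Lemma~\ref{lem:signature_sums} (linearity), Lemma~\ref{lem:bijection} (surjectivity), Corollary~\ref{cor:distinct_cycle_signatures} (injectivity on cycles), and Corollary~\ref{cor:distinct_homology_signatures} (well-definedness and injectivity on homology classes), which is precisely how you assemble it. Your remark that the codomain $\Z_2^{2g}$ should really be $\Z_2^{\beta}$ in general (they coincide only when $b=1$, as assumed in the cycle-basis section) is a fair and accurate observation about the statement as written.
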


\section{Minimum Cycle Basis}
\label{sec:cycle_basis}
We now describe our algorithm for computing a minimum cycle basis.
We assume without loss of generality that surface~$\Sigma$ contains exactly one boundary
component, because the addition or removal of boundary does not affect the cycles of~$G$.
We denote the one boundary component and its corresponding dual boundary vertex as $f_\infty$,
because it is analogous to the infinite face of a planar graph.
Our algorithm for minimum cycle basis is only for~$G$ embedded on an orientable surface~$\Sigma$.
We conclude~$\beta = 2g$.

Our algorithm is based on one of Kavitha, Mehlhorn, Michail and Paluch~\cite{kmmp-famcb-08} which
is in turn based on an algorithm of de Pina~\cite{p-aspm-95}.
Our algorithm incrementally adds simple cycles~$\gamma_1, \dots, \gamma_{m - n + 1}$ to the
minimum cycle basis.
In order to do so, it maintains a set of $(m - n + 1)$-bit \EMPH{support vectors}~$S_1, \dots,
S_{m - n + 1}$ with the following properties:
\begin{itemize}
  \item
    The support vectors form a basis for~$\Z_2^{m - n + 1}$.
  \item
    When the algorithm is about to compute the $j$th simple cycle~$\gamma_j$ for the minimum cycle
    basis, $\langle S_j, [\gamma_{j'}] \rangle = 0$ for all~$j' < j$.
\end{itemize}
Our algorithm chooses for~$\gamma_j$ the minimum weight cycle~$\gamma$ such that~$\langle S_j,
[\gamma] \rangle = 1$.
Note that~$S_j$ must have at least one bit set to~$1$, because the set of vectors~$S_1, \dots,
S_{m - n + 1}$ forms a basis.
Therefore, such a~$\gamma$ does exist; in particular, we could choose~$[\gamma]$ to contain
exactly one bit equal to~$1$ which matches any $1$-bit of~$S_j$.
The correctness of choosing~$\gamma_j$ as above is guaranteed by the following lemma.
\begin{lemma}
  \label{lem:min_weight_odd_hit}
  Let~$S$ be an $(m - n + 1)$-bit vector with at least one bit set to~$1$, and let~$\eta$ be the
  minimum weight cycle such that~$\langle S, [\eta] \rangle = 1$.
  Then,~$\eta$ is a member of the minimum cycle basis.
\end{lemma}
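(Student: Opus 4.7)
The plan is to use a standard matroid exchange argument against an arbitrary minimum cycle basis. Suppose for contradiction (or rather, suppose in general) that $B = \{\mu_1, \ldots, \mu_{m-n+1}\}$ is any minimum cycle basis. I will produce another minimum cycle basis that contains $\eta$, which establishes the claim.

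First, because the cycle signatures of the $\mu_i$ form a basis of $\Z_2^{m-n+1}$ (by Corollary \ref{cor:signature_isomorphisms} applied to $B$), there is a unique subset $I \subseteq \{1,\ldots,m-n+1\}$ with
\[
  [\eta] = \bigoplus_{i \in I} [\mu_i].
\]
Taking the dot product with $S$ on both sides and using bilinearity of $\langle \cdot, \cdot \rangle$ gives
\[
  1 = \langle S, [\eta] \rangle = \bigoplus_{i \in I} \langle S, [\mu_i] \rangle,
\]
so there must be some $i^* \in I$ with $\langle S, [\mu_{i^*}] \rangle = 1$. This is the single non-trivial step: it locates a basis cycle that is eligible for replacement by $\eta$.

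Second, I claim $B' := (B \setminus \{\mu_{i^*}\}) \cup \{\eta\}$ is still a cycle basis. Solving the defining relation for $[\mu_{i^*}]$ yields
\[
  [\mu_{i^*}] = [\eta] \oplus \bigoplus_{i \in I \setminus \{i^*\}} [\mu_i],
\]
so $[\mu_{i^*}]$ lies in the span of the signatures in $B'$. Hence $B'$ spans the same subspace as $B$, which is all of $\Z_2^{m-n+1}$; since $|B'| = m-n+1$, it is a basis, and Corollary \ref{cor:signature_isomorphisms} lifts this back to independence of the cycles themselves.

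Finally, the weight comparison closes the argument. The cycle $\mu_{i^*}$ satisfies $\langle S, [\mu_{i^*}] \rangle = 1$, so by the choice of $\eta$ as a minimum weight cycle with this property, $w(\eta) \leq w(\mu_{i^*})$. Therefore $w(B') \leq w(B)$. Since $B$ was minimum, equality holds, and $B'$ is a minimum cycle basis containing $\eta$. I do not anticipate a real obstacle here; the only mildly delicate point is justifying the exchange in step two, which is handled cleanly by viewing independence through the signature isomorphism rather than by arguing directly inside the cycle space.
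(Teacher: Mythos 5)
Your proof is correct, but it takes a different route from the paper's. The paper argues via the matroid greedy criterion: since every cycle strictly lighter than $\eta$ is orthogonal to $S$, so is every sum of such cycles, hence $\eta$ (which satisfies $\langle S,[\eta]\rangle = 1$) is independent of the set of all lighter cycles; the paper then invokes the fact that independent sets of cycles form a matroid to conclude $\eta$ lies in a minimum basis. You instead run the exchange argument explicitly against an arbitrary minimum basis $B$: expand $[\eta]$ over the signatures of $B$, use linearity of the dot product to locate a $\mu_{i^*}$ in the support with $\langle S,[\mu_{i^*}]\rangle = 1$, swap it for $\eta$, and compare weights. Your version is more self-contained (it does not rely on the correctness of the greedy algorithm for matroids, only on linear algebra over $\Z_2$ via the signature isomorphism), at the cost of being longer; the paper's version is shorter but outsources the exchange step to the matroid machinery. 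One point you gloss over: the claim $|B'| = m-n+1$ implicitly assumes $\eta \notin B \setminus \{\mu_{i^*}\}$, but this is automatic, since if $\eta$ equaled some $\mu_i$ then uniqueness of the representation forces $I = \{i\}$ and $i^* = i$, so the exchange is trivial in that case. This is cosmetic, not a gap.
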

\begin{proof}
  Let~$\eta_1, \dots, \eta_{2^{m - n + 1}}$ be the collection of cycles ordered by increasing
  weight, and choose~$j$ such that~$\eta_j = \eta$.
  For any subset~$\Upsilon$ of~$\Set{\eta_1, \dots, \eta_{j-1}}$, we have~$\langle [\bigoplus_{\eta'
  \in \Upsilon} \eta'], S \rangle = 0$.
  Therefore,~$\eta$ is independent of~$\Set{\eta_1, \dots, \eta_{j-1}}$.
  Sets of independent cycles form a matroid, so~$\eta$ must be a member of the minimum weight
  cycle basis.
\end{proof}

\subsection{Maintaining support vectors}
\label{subsec:cycle_basis_algorithm}
Our algorithm updates the support vectors and computes minimum cycle basis vectors in a recursive
manner.
Initially, each support vector~$S_i$ has only its $i$th bit set to~$1$.
Borrowing nomenclature from Kavitha~\etal~\cite{kmmp-famcb-08}, we define two procedures,
$\extend(j, k)$ which extends the current set of basis cycles by adding~$k$ cycles starting
with~$\gamma_j$, and $\update(j, k)$ which updates support vectors~$S_{j + \lfloor k/2 \rfloor},
\dots, S_{j + k - 1}$ so that for any~$j', j''$ with $j + \lfloor k/2 \rfloor \leq j' < j + k$
and~$1 \leq j'' < j + \lfloor k/2 \rfloor$, we have~$\langle S_{j'}, [\gamma_{j''}] \rangle = 0$.
Our algorithm runs $\extend(1, m - n + 1)$ to compute the minimum cycle basis.

We implement $\extend(j,k)$ in the following manner:
If~$k > 1$, then our algorithm recursively calls~$\extend(j, \lfloor k/2 \rfloor)$ to add~$\lfloor
k/2 \rfloor$ cycles to the partial minimum cycle basis.
It then calls $\update(j,k)$ so that support vectors $S_{j + \lfloor k/2 \rfloor}, \dots, S_{j + k
- 1}$ become orthogonal to the newly added cycles of the partial basis.
Finally, it computes the remaining~$\lceil k/2 \rceil$ basis cycles by calling $\extend(j +
\lfloor k/2 \rfloor, \lceil k/2 \rceil)$.
If~$k = 1$, then $\langle S_j, [\gamma_{j'}] \rangle = 0$ for all~$j' < j$.
Our algorithm is ready to find basis cycle~$\gamma_j$.
We describe an~$O(2^{2g} n)$ time procedure to find~$\gamma_j$ in Section~\ref{sec:selecting-cycles}.

We now describe $\update(j,k)$ in more detail.
Our algorithm updates each support vector~$S_{j'}$ where $j + \lfloor k/2 \rfloor \leq j' < j +
k$.
The vector~$S_{j'}$ becomes~$S'_{j'} = S_{j'} + \alpha_{j'0} S_{j} + \alpha_{j'1} S_{j+1} + \cdots
+ \alpha_{j' (\lfloor k/2 \rfloor - 1)} S_{j + \lfloor k/2 \rfloor - 1}$ for some set of scalar
bits~$\alpha_{j' 0} \dots \alpha_{j' (\lfloor k/2 \rfloor - 1)}$.
After updating, the set of vectors $S_1, \dots, S_{m - n + 1}$ remains a basis for~$\Z_2^{m - n +
1}$ regardless of the choices for the~$\alpha$ bits.
Further,~$\langle S'_{j'}, [\gamma_{j''}] \rangle = 0$ for all~$j'' < j$ for all choices of
the~$\alpha$ bits, because $\extend(j,k)$ is only called after its support vectors are updated to
be orthogonal to all minimum basis cycles~$\gamma_1, \dots, \gamma_{j-1}$.

However, it is non-trivial to guarantee~$\langle S'_{j'}, [\gamma_{j''}] \rangle = 0$ for
all~$j''$ where $j \leq j'' < j + \lfloor k/2 \rfloor$.
Let~$w^T$ denote the transpose of a vector~$w$.
Let
$$ X =
\begin{pmatrix}
  S_j \\
  \cdots \\
  S_{j+\lfloor k/2 \rfloor - 1}
\end{pmatrix}
\cdot \Paren{[\gamma_j]^T \cdots [\gamma_{j + \lfloor k/2 \rfloor - 1}]^T}$$
and
$$ Y =
\begin{pmatrix}
  S_{j + \lfloor k/2 \rfloor} \\
  \cdots \\
  S_{j + k - 1}
\end{pmatrix}
\cdot \Paren{[\gamma_j]^T \cdots [\gamma_{j + \lfloor k/2 \rfloor - 1}]^T}.$$
Let $A = YX^{-1}$.
Row $j' - j - \lfloor k/2 \rfloor + 1$ of matrix~$A$ contains exactly the bits~$\alpha_{j' 0}
\dots \alpha_{j' (\lfloor k/2 \rfloor - 1)}$ we are seeking~\cite[Section 4]{kmmp-famcb-08}.
Matrices~$X$,~$Y$, and~$A$ can be computed in~$O(n k^{\omega - 1})$ time using fast matrix
multiplication and inversion, implying the new support vectors~$S'_{j+ \lfloor k/2 \rfloor},
\dots, S'_{j+k-1}$ can be computed in the same amount of time.

We can bound the running time of $\extend(j,k)$ using the following recurrence:
$$
  T(k) =
  \begin{cases}
    2T(k/2) + O(nk^{\omega - 1}) & \text{if } k > 1 \\
    O(2^{2g} n) & \text{if } k = 1
  \end{cases}
$$

The total time spent in calls to $\extend(j,k)$ where~$k > 1$ is~$O(n k^{\omega - 1})$.
The total time spent in calls to $\extend(j,1)$ is~$O(2^{2g}nk)$.
Therefore,~$T(k) = O(nk^{\omega - 1} + 2^{2g}nk)$.
The running time of our minimum cycle basis algorithm (after sparsifying~$G$) is~$T(O(n)) =
O(n^{\omega} + 2^{2g}n^2)$.

\section{Selecting Cycles}
\label{sec:selecting-cycles}

A {\em Horton cycle} is a simple cycle given by a shortest $x,u$-path, a shortest $x,v$-path, and
the edge $uv$; in particular, the set of all Horton cycles is given by the set of $m - n + 1$
elementary cycles for each of the $n$ shortest path trees~\cite{h-ptafs-87}.
Thus, in sparse graphs, there are $O(n^2)$ Horton cycles.
A simple cycle $\gamma$ of a graph $G$ is {\em isometric} if for every pair of vertices $x, y \in
\gamma$, $\gamma$ contains a shortest $x,y$-path.
Hartvigsen and Mardon prove that the cycles of any minimum cycle basis are all
isometric~\cite{HM94}.
Therefore, it suffices for us to focus on the set of isometric cycles to find the cycle $\gamma_j$ as needed for Section~\ref{subsec:cycle_basis_algorithm}.
Amaldi \etal~\cite{AIJMR09} show how to extract the set of distinct isometric cycles from a set
of Horton cycles in $O(nm)$ time.
Each isometric cycle is identified by a shortest path tree's root and a non-tree edge.

Here, we show that there are at most $O(2^{2g}n)$ isometric cycles in our graph of genus
$g$ (Section~\ref{sec:isom-cycl-surf}), and they can be partitioned into sets according to their
homology classes.
We can represent the isometric cycles in a given homology class using a tree that can be built in
$O(n^2)$ time (Section~\ref{sec:repr-isom-cycl}).
We then show that we can use these trees to find the minimum-cost cycle $\gamma_j$ as needed for
Section~\ref{subsec:cycle_basis_algorithm} in linear time per homology class of isometric cycles.
We close with a discussion on how to improve the running time for computing and representing
isometric cycles (Section~\ref{sec:impr-time-find}).
While these improvements do not improve the overall running time of our algorithm (by maintaining separate representations of the cycles according to their homology class, we require linear time per representation to process the support vector with respect to which $\gamma_j$ is non-orthogonal; we also require
$O(n^\omega)$ time to update the support vectors), it does further emphasize the bottleneck our
algorithm faces in updating and representing the support vectors.

\subsection{Isometric cycles in orientable surface embedded graphs}\label{sec:isom-cycl-surf}

Here we prove some additional structural properties that isometric cycles have in orientable
surface embedded graphs.
To this end, we herein assume that shortest paths are unique.
Hartvigsen and Mardon show how to achieve this assumption algorithmically when, in particular, all pairs of shortest paths are computed, as we do~\cite{HM94}.
We first prove a generalization of the following lemma for the planar case by Borradaile, Sankowski and Wulff-Nilsen.

\begin{lemma}[Borradaile {\etal~\cite[Lemma 1.4]{BSW15}}]\label{lem:simple-int}
  Let $G$ be a graph in which shortest paths are unique. The intersection between an isometric
  cycle and a shortest path in $G$ is a (possibly empty) shortest path.  The intersection between
  two distinct isometric cycles $\gamma$ and $\gamma'$ in $G$ is a (possibly empty) shortest path;
  in particular, if $G$ is a planar embedded graph, $\gamma$ and $\gamma'$ do not cross.
\end{lemma}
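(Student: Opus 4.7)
The plan is to leverage two facts throughout: \emph{(i)} uniqueness of shortest paths, and \emph{(ii)} the defining property of an isometric cycle~$\gamma$, namely that for every pair of vertices $u,v\in\gamma$, cycle $\gamma$ contains a shortest $u,v$-path. Combined, these give the following recurring micro-argument: if $u,v$ both lie on an isometric cycle $\gamma$ and $q$ is \emph{any} shortest $u,v$-path, then $\gamma$ already contains a shortest $u,v$-path, so by uniqueness that path equals $q$ and hence $q\subseteq\gamma$. The whole lemma will be built from this observation.

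For the first statement, let $p$ be a shortest path and $\gamma$ an isometric cycle. Pick any two vertices $u,v\in p\cap\gamma$. The subpath $p[u,v]$ is itself a shortest $u,v$-path (a subpath of a shortest path is shortest), so by the micro-argument $p[u,v]\subseteq\gamma$, and therefore $p[u,v]\subseteq p\cap\gamma$. Applied to the two endpoints of $p\cap\gamma$ that are furthest apart along $p$, this shows that $p\cap\gamma$ is a connected subpath of $p$, and in fact a shortest path. For the second statement, let $\gamma\neq\gamma'$ both be isometric. The same micro-argument, applied once in each of $\gamma$ and $\gamma'$ for arbitrary $u,v\in\gamma\cap\gamma'$, shows that the unique shortest $u,v$-path lies in both cycles, hence in $\gamma\cap\gamma'$. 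It remains to show connectivity: if $\gamma\cap\gamma'$ decomposed as two or more arcs of $\gamma$, then choosing $u$ and $v$ in different arcs would force the unique shortest $u,v$-path (which must be one of the two arcs of $\gamma$ between $u$ and $v$, since $\gamma$ is isometric) to lie in $\gamma\cap\gamma'$, merging the two components---a contradiction. Thus $\gamma\cap\gamma'$ is a single path, and by the first statement applied with any shortest realization it is a shortest path.

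Finally, for the planar corollary, write $\gamma=p\cup q$ and $\gamma'=p\cup q'$ where $p=\gamma\cap\gamma'$ and $q$, $q'$ are internally disjoint from $p$ and from each other (except at their shared endpoints, the endpoints of $p$). Contracting $p$ to a single vertex $v$ in the planar embedding yields a planar graph in which $\gamma$ and $\gamma'$ become two simple cycles meeting only at $v$; such cycles cannot have their edges alternate in the rotation at $v$, since that would force two disjoint simple closed curves in the plane to have an odd crossing number at $v$, which is impossible by the Jordan curve theorem. Hence $\gamma$ and $\gamma'$ do not cross. I expect the main technical nuisance to be this last topological step---being careful that ``crossing'' in the sense of the paper (alternation after contracting the maximal common subpath) really is ruled out by planarity when the common subpath is connected; everything preceding it is essentially bookkeeping with the micro-argument.
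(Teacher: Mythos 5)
Your proof is correct. Note that the paper itself gives no proof of this lemma---it is imported by citation from Borradaile, Sankowski, and Wulff-Nilsen~\cite{BSW15}---and your argument (the uniqueness-plus-isometry ``micro-argument'' to show the intersection contains the unique shortest path between any two common vertices and hence is a single arc, followed by the even-crossing-number argument for two closed curves in the plane) is exactly the standard route to this fact; the only cosmetic nitpicks are that you should also rule out $\gamma\cap\gamma'=\gamma$ (which would force $\gamma=\gamma'$) when concluding the connected intersection is an arc, and that in the planar step the two curves are not disjoint but meet only at $v$, where an alternation would give a single transversal crossing and hence an odd crossing number.
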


\begin{lemma}\label{lem:nestingisom}
  Two isometric cycles in a given homology class in a graph with unique shortest paths do not cross.
\end{lemma}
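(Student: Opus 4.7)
The plan is to proceed by contradiction, assuming $\gamma$ and $\gamma'$ cross. By Lemma~\ref{lem:simple-int}, the intersection $\pi := \gamma \cap \gamma'$ is a (possibly trivial) shortest path; let $u, v$ denote its endpoints (the degenerate case where $\pi$ is a single vertex is handled analogously, with $u = v$). I decompose $\gamma = \pi \cup \alpha$ and $\gamma' = \pi \cup \alpha'$, where $\alpha$ and $\alpha'$ are simple $u,v$-paths internally disjoint from each other (since $\gamma \cap \gamma' = \pi$) and from $\pi$. Because $\gamma$ and $\gamma'$ are isometric, one of each cycle's two $u,v$-arcs is a shortest $u,v$-path; by Lemma~\ref{lem:simple-int} this shortest arc is $\pi$, and uniqueness of shortest paths then forces the strict inequalities $w(\alpha) > w(\pi)$ and $w(\alpha') > w(\pi)$ (equality would force $\alpha = \pi$ or $\alpha' = \pi$, contradicting simplicity of $\gamma$ or $\gamma'$).

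Next I invoke the homology hypothesis. Since $[\gamma] = [\gamma']$, the symmetric difference $\gamma \oplus \gamma' = \alpha \cup \alpha'$ is null-homologous. Because $\alpha \cap \alpha' = \{u, v\}$, the set $\zeta := \alpha \cup \alpha'$ is a simple cycle, and it bounds a subset of faces $F'$. Since $\Sigma$ is orientable, a simple null-homologous cycle always separates, so $\Sigma \setminus \zeta$ has exactly two components — the interior of $F'$ and its complement. The path $\pi$, whose interior is disjoint from $\zeta$, lies entirely in one of them; without loss of generality, its interior lies in $F'$ (the other case is symmetric, swapping the roles of $F'$ and its complement).

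The crux is then a local topological argument at the tubular neighborhood of $\pi$. Because $\Sigma$ is orientable, this neighborhood is a rectangle with $\pi$ as its central axis, and its two long sides both lie in $F'$. The edges $a_u, a'_u$ (the first edges of $\alpha, \alpha'$ at $u$) flank the edge $\pi_u$ in the cyclic order at $u$, sitting just outside the two long sides of the rectangle, and symmetrically $a_v, a'_v$ flank $\pi_v$ at $v$. Orientability then identifies the side of $\pi$ containing $a_u$ at $u$ with the corresponding side at $v$, which therefore must contain $a_v$; otherwise the two long sides of the rectangle would have to interchange along $\pi$, contradicting orientability of the tubular neighborhood. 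Consequently, after contracting $\pi$ to a vertex $w$, the four edges $a_u, a_v, a'_v, a'_u$ appear in this cyclic order around $w$ — with each cycle's pair of edges grouped consecutively rather than alternating. This contradicts the definition of a crossing given in Section~\ref{sec:prelims}, which requires exactly the alternating pattern at the contracted vertex.

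The principal obstacle is formalizing the final paragraph — extracting, purely from orientability and the inclusion $\pi \subseteq F'$, the matched cyclic orderings at $u$ and $v$ that preclude alternation. All earlier steps (applying Lemma~\ref{lem:simple-int}, extracting the strict weight inequalities, and producing the bounding 2-chain $F'$) follow directly from the cited results and the definition of homology.
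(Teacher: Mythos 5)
Your route is genuinely different from the paper's, and the step you flag as the ``principal obstacle'' is exactly where your argument, as written, does not yet hold up. The paper argues globally: since $\gamma\cap\gamma'$ is a single path, the two cycles cross exactly once, and then cutting $\Sigma$ open along $\gamma$ turns $\gamma'$ into a single path joining the two resulting boundary copies of $\gamma$ (in the non-null-homologous case) or joining two different components (in the null-homologous case); either way one contradicts the fact that a null-homologous $\gamma\oplus\gamma'$ must separate the surface. Your proof instead works locally at the contracted intersection path, which is legitimate and reaches the correct final cyclic order $a_u,a_v,a'_v,a'_u$, but the justification offered is not sound as stated. The phrase ``the side of $\pi$ containing $a_u$ at $u$'' is not well-defined: at an endpoint of $\pi$, every edge other than $\pi_u$ lies in a single arc of the rotation at $u$ that wraps from one long side of the rectangle around to the other, so edges at $u$ are not assigned sides of $\pi$; likewise $a_u$ and $a'_u$ need not ``flank'' $\pi_u$ or sit adjacent to the long sides, since other edges may intervene. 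The constraint you actually get for free --- that $\pi_u$ lies in the $F'$-sector between $a_u$ and $a'_u$ at $u$, and symmetrically at $v$ --- is by itself consistent with \emph{both} the alternating and the non-alternating cyclic orders at the contracted vertex, so more is needed.

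The repair is to orient $\zeta=\alpha\cup\alpha'$ as $u\xrightarrow{\alpha}v\xrightarrow{\alpha'}u$ and use orientability to put $F'$ consistently on, say, the left of $\zeta$ at every point. At $u$ the left sector runs counterclockwise from the outgoing edge $a_u$ to the incoming edge $a'_u$ and contains $\pi_u$; at $v$ the left sector runs counterclockwise from $a'_v$ to $a_v$ and contains $\pi_v$. These two facts pin down the relative order of $a'_u$ before $a_u$ within the $u$-block of the rotation at the contracted vertex, and of $a_v$ before $a'_v$ within the $v$-block; splicing the blocks (the $u$-edges and $v$-edges each form a contiguous block, separated by the two long sides of the rectangle) then yields the non-alternating order $a'_u,a_u,a_v,a'_v$, as you claim. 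With that substitution your proof goes through. Two smaller points: the degenerate case $u=v$ is not ``handled analogously,'' since there $\zeta$ is a wedge of two circles rather than a simple cycle and the sector analysis must be redone at the single shared vertex; and the strict weight inequalities in your first paragraph are never used and can be deleted.
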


\begin{proof}
  Let $\gamma$ and $\gamma'$ be two isometric cycles in a given homology class.  Suppose for a
  contradiction that $\gamma$ and $\gamma'$ cross.  By the first part of
  Lemma~\ref{lem:simple-int}, and the assumption that $\gamma$ and $\gamma'$ cross, $\gamma \cap
  \gamma'$ is a single simple path $p$.  Therefore, $\gamma$ and $\gamma'$ cross exactly once.

Suppose $\gamma$ and $\gamma'$ are not null-homologous.
Cutting the surface open along $\gamma$ results in a connected surface with two boundary
components which are
connected by $\gamma'$.  Cutting the surface further along $\gamma'$ does not disconnect the
surface.  Therefore $\gamma \oplus \gamma'$ does not disconnect the surface, and so $\gamma$ and
$\gamma'$ are not homologous, a contradiction.

If $\gamma$ and $\gamma'$ are null-homologous, then cutting the surface open along $\gamma$
results in a disconnected surface in which $\gamma' \setminus p$ is a path, but between different components of the surface, a contradiction.
\end{proof}

\begin{corollary}\label{cor:isom}
There are at most $\ell$ distinct isometric cycles in a given homology class in a graph with
$\ell$ faces and unique shortest paths.
\end{corollary}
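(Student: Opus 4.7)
The plan is to exploit the pairwise non-crossing property guaranteed by Lemma~\ref{lem:nestingisom} together with the fact that homologous distinct cycles differ by a non-empty set of faces. Let $\gamma_1, \ldots, \gamma_k$ be the distinct isometric cycles in a given homology class. For any two of them, $\gamma_i \oplus \gamma_j$ is null-homologous (since $[\gamma_i]_h = [\gamma_j]_h$ by Corollary~\ref{cor:distinct_homology_signatures}), so it bounds some subset of faces of $G$; moreover this subset is non-empty, since $\gamma_i \neq \gamma_j$ as edge sets.

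The core of the argument is that the $k$ non-crossing homologous cycles partition $\Sigma$ into regions, each of which must contain at least one face of $G$. I would formalize this by induction on $k$. For $k = 1$ the bound is immediate. For the inductive step I would identify an ``extremal'' cycle $\gamma_k$: in the non-trivial homology class case, iteratively cutting along $\gamma_1, \ldots, \gamma_{k-1}$ (as in the proof of Lemma~\ref{lem:nestingisom}) reduces each remaining $\gamma_i$ to a cycle that, together with previously cut cycles, separates the surface into pieces that can be linearly ordered as annular regions; the extremal one is bounded (in part) by a single $\gamma_k$. In the null-homologous case, one chooses $\gamma_k$ so that the set $F_k$ it bounds is minimal among the laminar family $\{F_i\}$ obtained by consistently picking a ``side'' for each cycle. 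In either situation, the extremal region must contain at least one face $f$ of $G$---otherwise $\gamma_k$ would coincide with another $\gamma_i$ as an edge set. Removing $\gamma_k$ (and contracting the extremal region to absorb $f$) yields a graph with at most $\ell - 1$ faces containing $k-1$ distinct pairwise non-crossing homologous isometric cycles, and the inductive hypothesis gives $k - 1 \leq \ell - 1$.

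The main obstacle will be rigorously establishing the existence of an extremal cycle and region, which is essentially a topological claim about the arrangement of pairwise non-crossing homologous simple cycles on $\Sigma$. Concretely, this amounts to showing that in both the null- and non-null-homologous cases, the subsurfaces produced by cutting along the $\gamma_i$ admit a linear (or laminar) ordering whose extremal element is bounded by a single cycle from our collection. Once this structural fact is in hand, the face-counting and induction are routine.
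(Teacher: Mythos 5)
Your proposal is correct and follows essentially the same route as the paper: both arguments rest on Lemma~\ref{lem:nestingisom} plus the observation that cutting along the pairwise non-crossing homologous cycles decomposes the surface into regions, each of which must contain at least one face (else two of the cycles would coincide as edge sets), giving at most $\ell$ cycles. The structural fact you flag as the main obstacle---that the regions admit a linear/laminar ordering with an extremal element---is exactly what the paper establishes, but via a forward induction (add the cycles one at a time; each new cycle lies inside a single component of the cut-open surface and splits it into two, with every component bounded by exactly two of the cycles), which sidesteps both the extremal-element argument and the need to modify the graph and re-verify isometry or homology after removing a cycle.
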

\begin{proof}
  Consider the set $\Set{C_1, C_2, \ldots}$ of distinct isometric cycles in a
  given homology class other than the null homology class.
  We prove by induction that $\Set{C_1, C_2, \ldots, C_i}$
  cut the surface into non-trivial components, each of which is bounded by exactly two of
  $C_1, C_2, \ldots, C_i$; this is true for $C_1, C_2$ since they are
  homologous, distinct and do cross.
  $C_{i+1}$ must be contained in one component, bounded
  by, say, $C_j$ and $C_k$ since $C_{i+1}$ does not cross any other cycle.
  Cutting
  this component along $C_{i+1}$ creates two components bounded by $C_j,C_{i+1}$
  and $C_k, C_{i+1}$, respectively.
  Since the cycles are distinct, these component must
  each contain at least one face.
  A similar argument holds for the set of null-homologous isometric cycles.
\end{proof}

Since there are $2^{2g}$ homology classes and $\ell = O(n)$, we get:
\begin{corollary} \label{cor:superset}
  There are $O(2^{2g}n)$ distinct isometric cycles in a graph of orientable genus $g$ with unique
  shortest paths.  
\end{corollary}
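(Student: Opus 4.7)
The plan is to simply multiply the per-homology-class bound from Corollary~\ref{cor:isom} by the total number of homology classes. First I would invoke Corollary~\ref{cor:signature_isomorphisms}, which establishes that the first homology space of $G$ is isomorphic to $\mathbb{Z}_2^{\beta}$; since we have assumed $\Sigma$ is orientable with one boundary component (as in Section~\ref{sec:cycle_basis}), $\beta = 2g$, so there are exactly $2^{2g}$ distinct homology classes to account for.

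Next I would apply Corollary~\ref{cor:isom} to each homology class individually: that corollary guarantees that any fixed homology class contains at most $\ell$ isometric cycles, where $\ell$ is the number of faces. Summing this bound over all $2^{2g}$ classes gives at most $2^{2g} \cdot \ell$ distinct isometric cycles in $G$.

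Finally, I would recall from the sparsification argument in Section~\ref{sec:prelims} that the preprocessing ensures $\ell = O(n)$ (every face has degree at least $3$, so Euler's formula bounds both $m$ and $\ell$ by $O(n)$ once the constant-degree-face cases have been handled). Substituting yields the claimed bound of $O(2^{2g} n)$.

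There is really no hard step here: the corollary is a routine aggregation of the structural lemma (Corollary~\ref{cor:isom}), the signature isomorphism (Corollary~\ref{cor:signature_isomorphisms}), and the $\ell = O(n)$ guarantee from preprocessing. The only thing worth being careful about is that Corollary~\ref{cor:isom} was stated for a single homology class and its proof treats the null-homologous case separately; I would make sure to note that both cases are already handled there so the multiplicative bound applies uniformly to all $2^{2g}$ classes.
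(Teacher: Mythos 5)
Your proposal is correct and matches the paper's argument exactly: the paper derives the corollary in one line by multiplying the per-class bound of Corollary~\ref{cor:isom} by the $2^{2g}$ homology classes and using $\ell = O(n)$. Your extra care about the null-homologous case and the source of $\ell = O(n)$ is fine but not needed beyond what the paper already states.
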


We remark that Lemma~\ref{lem:nestingisom} is not true for graphs embedded in non-orientable
surfaces, because homologous cycles may cross exactly once.
In fact, one can easily construct an arbitrarily large collection of homologous cycles that are
pairwise crossing in a graph embedded in the projective plane.

\subsection{Representing isometric cycles in each homology class}\label{sec:repr-isom-cycl}

We begin by determining the homology classes of each of the $O(2^{2g} n)$ isometric cycles in the
following manner.
Let~$p$ be a simple path, and let~$[p]_h$ denote the bitwise exclusive-or of the homology
signatures of its edges.
Let~$r$ be the root of any shortest path tree~$T$.
Recall, $\sigma(r,v)$ denotes the shortest path between $r$ and $v$.
It is straightforward to compute $[\sigma(r,v)]_h$ for every vertex $v \in V$ in~$O(gn)$ time by
iteratively computing signatures in a leafward order.
Then, the homology signature of any isometric cycle~$\gamma = \sigma(r,u) \cdot uv \cdot
\sigma(v,r)$ can be computed in~$O(g)$ time as $[\sigma(r,u)]_h \oplus [uv]_h \oplus
[\sigma(r,v)]_h$.
We spend~$O(2^{2g} g n) = O(2^{2g} n^2)$ time total computing homology signatures and therefore
homology classes.
For the remainder of this section, we consider a set of isometric cycles $\cal C$ in a single homology class.  

Let $\gamma, \gamma' \in {\cal C}$ be two isometric cycles in the same homology class.
The combination $\gamma \oplus \gamma'$ forms the boundary of a subset of faces.
That is, $(G \snip \gamma) \snip  \gamma'$ contains at least two components.
We represent the cycles in ${\cal C}$ by a tree $T_{\cal C}$ where each edge~$e$ of~$T_{\cal C}$
corresponds to a cycle~$\gamma(e) \in {\cal C}$ and each node~$v$ in~$T_{\cal C}$ corresponds to a
subset $F(v) \in (F \cup \{f_\infty\})$;
specifically, the nodes correspond to sets of faces in the components of $G \snip {\cal C}$.
This tree generalizes the {\em region tree} defined by Borradaile, Sankowski and Wulff-Nilsen
for planar graphs~\cite{BSW15} to more general orientable surface embedded graphs.
We also designate a single representative cycle~$\gamma({\cal C})$ of~${\cal C}$ and pre-compute
its cycle signature~$[\gamma({\cal C})]$ for use in our basis cycle finding procedure.
See Figure~\ref{fig:rTrees}.

\begin{figure}[t]
  \centering
    \includegraphics[height=1.7in]{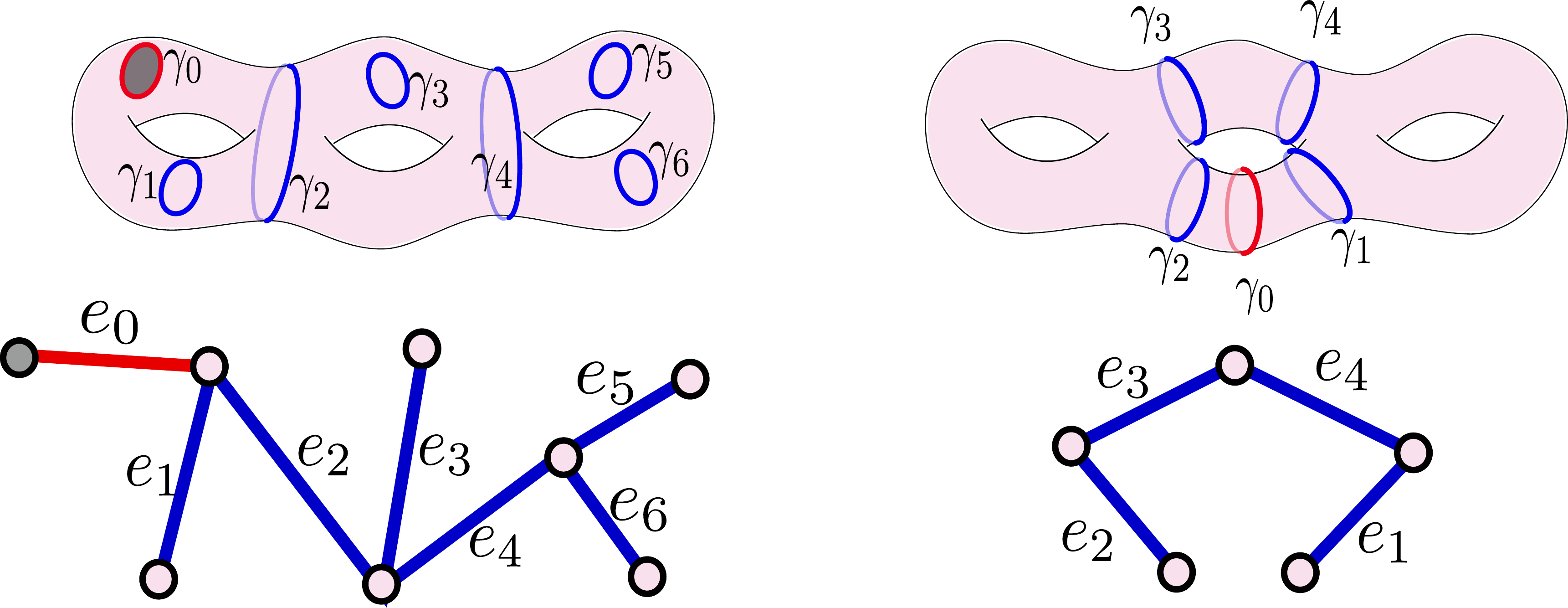}
  \caption{Two collections of homologous cycles and their generalized region trees.
    Left: The cycles are null-homologous.
  Right: The cycles lie in a non-trivial homology class.}
  \label{fig:rTrees}
\end{figure}

We describe here the construction of $T_{\cal C}$.
Initially,~$T_{\cal C}$ is a single vertex with one (looping) edge to itself (we will
guarantee~$T_{\cal C}$ is a tree later).
Let~$\gamma_0$ be an arbitrary cycle in~${\cal C}$.
We compute $G' = G \snip \gamma_0$.
For the one vertex~$v$ of~$T_{\cal C}$, we set~$F(v) = F \cup \{f_\infty\}$ and for the one
edge~$e$, we set~$\gamma(e) = \gamma_0$.

We maintain the invariants that every component of $G'$ is bound by at least two cycles of ${\cal
C}$
(initially the cycle~$\gamma_0$ is used twice), each vertex of $T_{\cal C}$ is associated with all
faces in one component of~$G'$ (possibly including $f_\infty$), and each edge~$e$ in $T_{\cal C}$
is associated with the cycle in ${\cal C}$ bounding the faces for the two vertices incident
to~$e$.
Assuming these invariants are maintained, and because cycles in ${\cal C}$ do not cross, each
cycle in ${\cal C}$ lies entirely within some component of~$G'$.
For each cycle $\gamma \in {\cal C} \setminus \Set{\gamma_0}$, we set $G' := G' \snip \gamma$,
subdivide the vertex associated with the faces of $C$'s component, associate the two sets of faces
created in~$G'$ with the two new vertices of $T_{\cal C}$, and associate the new edge of $T_{\cal
C}$ with $\gamma$.

Let~$r$ be the vertex of~$T_{\cal C}$ associated with~$f_{\infty}$.
If cycles in $\cal C$ have trivial homology, then they each separate~$G$, and $T_{\cal C}$ is a
tree.
We root~$T_{\cal C}$ at~$r$ and let~$\gamma({\cal C})$ be an arbitrary cycle.
Otherwise, let~$e$ be an arbitrary edge incident to~$r$.
We set~$\gamma({\cal C})$ to be $\gamma(e)$, remove~$e$ from~$T_{\cal C}$, and
root~$T_{\cal}$ at~$r$.
Observe that~$T_{\cal}$ has exactly one leaf other than~$r$ in this case.

Computing $G' \snip \gamma$ for one cycle~$\gamma$ takes~$O(n)$ time.
Therefore, we can compute~$T_{\cal C}$ in~$O(n^2)$ time total.

\subsection{Selecting an isometric cycle from a homology class}\label{sec:select-an-isom}

Let~$S$ be an $(m - n + 1)$-bit support vector.
We describe a procedure to compute $\langle S, [\gamma] \rangle$ for every isometric
cycle~$\gamma$ in~$G$
in~$O(2^{2g} n)$ time.
Using this procedure, we can easily return the minimum weight cycle such that $\langle S, [\gamma]
\rangle = 1$.

We begin describing the procedure for cycles in the trivial homology class.
Let~${\cal C}$ be the collection of null-homologous isometric cycles computed above, and
let~$T_{\cal C}$ be the tree computed for this set.
Consider any edge~$e$ of~$T_{\cal C}$.
The first~$2g$ bits of~$[\gamma(e)]$ are equal to~$0$, because any co-cycle crosses a cut in the
dual an even number of times.
Cycle~$\gamma(e)$ bounds a subset of faces~$F'$. 
In particular,~$F'$ is the set of faces associated with vertices lying \emph{below}~$e$ in~$T_{\cal
C}$.
The $i$th bit of~$[\gamma(e)]$ is~$1$ if and only if~$p_i$ crosses cut~$F'$ an odd number of
times; in other words, the $i$th bit is~$1$ if and only if~$f_i \in F'$.

We compute $\langle S, [\gamma] \rangle$ for every cycle~$\gamma \in {\cal C}$ in~$O(n)$ time
by essentially walking up~$T_{\cal C}$ in the following manner.
For each edge~$e$ in~$T_{\cal C}$ going to a leaf~$v$, we maintain a bit~$z$ initially equal
to~$0$ and iterate over each face~$f_i \in F(v)$.
If the~$i$th bit of~$S$ is equal to~$1$ then we flip~$z$.
After going through all the faces in~$F(v)$,~$z$ is equal to~$\langle S, [\gamma(e)] \rangle$.

We then iterate up the edges of~$T_{\cal C}$ toward the root.
For each edge~$e$, we let~$v$ be the lower endpoint of~$e$ and set bit~$z$ equal to the
exclusive-or over all $\langle S, \gamma(e') \rangle$ for edges~$e'$ lying below~$v$.
We then iterate over the faces of~$F(v)$ as before and set~$\langle S, [\gamma(e)] \rangle$ equal
to~$z$ as before.
We iterate over every face of~$G$ at most once during this procedure, so it takes~$O(n)$ time
total.

Now, consider the set of isometric cycles~${\cal C}$ for some non-trivial homology class.
Consider any edge~$e$ of~$T_{\cal C}$.
Let~$F'$ be the subset of faces bound by~$\gamma({\cal C}) \oplus \gamma(e)$.
The $i$th bit of~$[\gamma(e)]$ disagrees with the $i$th bit of~$[\gamma({\cal C})]$ if and only if
path~$p_i$ crosses dual cut~$F'$ an odd number of times; in other words, the $i$th bits differ if
and only if~$f_i \in F'$.
By construction,~$\gamma({\cal C})$ lies on the boundary of~$F(r)$ and~$F(v)$ where~$r$ and~$v$
are the root and other leaf of~$T_{\cal C}$ respectively.
Root~$r$ is the only node of~$T_{\cal C}$ associated with~$f_{\infty}$.
We conclude the $i$th bit of~$[\gamma(e)]$ disagrees with~$[\gamma({\cal C})]$ if and only
if~$f_i$ is associated with a vertex lying below~$e$ in~$T_{\cal C}$.

We again walk up~$T_{\cal C}$ to compute $\langle S, [\gamma] \rangle$ for every cycle~$\gamma \in
{\cal C}$.
Recall,~$[\gamma({\cal C})]$ is precomputed and stored with~$T_{\cal C}$.
For each edge~$e$ of~$T_{\cal C}$ in rootward order, let~$v$ be the lower endpoint of~$e$.
Let~$e'$ be the edge lying below $e$ in~$T_{\cal C}$ if it exists.
If~$e'$ does not exist, we denote~$\gamma(e')$ as~$\gamma({\cal C})$.
We set~$z$ equal to $\langle S, \gamma(e') \rangle$.
We then iterate over the faces of~$F(v)$ as before, flipping~$z$ once for every bit~$i$ where~$f_i
\in F(v)$ and bit~$i$ of~$S$ is equal to~$1$.
We set $\langle S, \gamma(e) \rangle := z$.
As before, we consider every face at most once, so the walk up~$T_{\cal C}$ takes~$O(n)$ time.

We have shown the following lemma, which concludes the discussion of our minimum cycle basis
algorithm.
\begin{lemma}
  \label{lem:min_odd_hit}
  Let~$G$ be a graph with~$n$ vertices,~$m$ edges, and~$\ell$ faces cellulary embedded in an
  orientable surface of genus~$g$ such that~$m = O(n)$ and~$\ell = O(n)$.
  We can preprocess~$G$ in~$O(2^{2g} n^2)$ time so that for any $(m - n + 1)$-bit support
  vector~$S$ we can compute the minimum weight cycle~$\gamma$ such that $\langle S, \gamma
  \rangle = 1$ in~$O(2^{2g} n)$ time.
\end{lemma}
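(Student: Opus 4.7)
The lemma packages together the three components developed in this section; the plan is to assemble them in order.

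For the preprocessing, I would first extract the set of isometric cycles. The all-pairs shortest paths are already computed as part of the sparsification of Section~\ref{sec:prelims}, so the Amaldi \etal\ procedure yields the $O(2^{2g} n)$ distinct isometric cycles (Corollary~\ref{cor:superset}) in $O(nm) = O(n^2)$ time, each identified by a shortest-path-tree root and a non-tree edge. Next I would tag each isometric cycle with its homology class: computing $[\sigma(r,v)]_h$ for every shortest-path-tree root $r$ and vertex $v$ in leafward order costs $O(gn)$ per root and $O(gn^2) = O(2^{2g} n^2)$ total (using $g = O(n^{1-\varepsilon})$), after which each cycle $\sigma(r,u) \cdot uv \cdot \sigma(v,r)$ gets its homology signature in $O(g)$ time as $[\sigma(r,u)]_h \oplus [uv]_h \oplus [\sigma(v,r)]_h$.

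I would then, for each of the at most $2^{2g}$ non-empty homology classes $\cal C$, construct the tree $T_{\cal C}$ by the successive-cut construction of Section~\ref{sec:repr-isom-cycl}. By Corollary~\ref{cor:isom} a single class contains at most $\ell = O(n)$ isometric cycles, and one cut $G' \snip \gamma$ costs $O(n)$, so one tree is built in $O(n^2)$ time and all of them together in $O(2^{2g} n^2)$. Alongside each tree I would precompute and store the full cycle signature $[\gamma({\cal C})]$ of the representative cycle in $O(n)$ additional time by checking each dual path $p_1,\dots,p_{m-n+1}$ against $\gamma({\cal C})$. The total preprocessing is $O(2^{2g} n^2)$ as claimed.

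For the query on support vector $S$, I would run the walk-up procedure of Section~\ref{sec:select-an-isom} on every tree. The identifying observation is that the $i$th bit of $[\gamma(e)] \oplus [\gamma({\cal C})]$ equals $1$ iff $f_i$ sits in the subset of faces lying below $e$ in $T_{\cal C}$, so a leaf-to-root XOR accumulation of $S$-bits produces $\langle S, [\gamma(e)] \rangle$ at every edge $e$; for the trivial homology class one simply takes $[\gamma({\cal C})] = \mathbf{0}$. Each face of $G$ is inspected at most once per tree, and the face sets across distinct trees overlap only in that each face of $G$ appears in each tree's leaves once, giving $O(n)$ per tree and $O(2^{2g} n)$ overall. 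Returning the minimum-weight cycle with $\langle S, [\gamma] \rangle = 1$ among the isometric cycles then yields the desired $\gamma$, whose membership in the minimum cycle basis is guaranteed by Lemma~\ref{lem:min_weight_odd_hit} together with the Hartvigsen-Mardon theorem that every minimum-basis cycle is isometric.

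The main obstacle is the non-trivial-homology bookkeeping: $T_{\cal C}$ is initially a single looped edge and is only made a tree by deleting one edge incident to the root, and the walk-up must be seeded from $\langle S, [\gamma({\cal C})] \rangle$ rather than $0$. Once these conventions are in place, correctness is a direct induction on edges of $T_{\cal C}$ in rootward order and the running-time analysis is a straightforward summation over homology classes.
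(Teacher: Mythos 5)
Your proposal is correct and follows essentially the same route as the paper: the lemma is proved by assembling the isometric-cycle extraction, the homology-class region trees $T_{\cal C}$ with precomputed representative signatures, and the leafward-to-rootward XOR walk of Section~\ref{sec:select-an-isom}, with correctness resting on Lemma~\ref{lem:min_weight_odd_hit} plus the Hartvigsen--Mardon isometry of minimum-basis cycles, exactly as you describe. The subtleties you flag (removing one root edge to make $T_{\cal C}$ a tree in the non-trivial classes and seeding the walk with $\langle S, [\gamma({\cal C})]\rangle$) are the same ones the paper handles.
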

\begin{theorem}
  \label{thm:cycle_basis}
  Let~$G$ be a graph with~$n$ vertices and~$m$ edges, cellularly embedded in an orientable surface of genus~$g$.
  We can compute a minimum weight cycle basis of~$G$ in~$O(n^{\omega} + 2^{2g} n^2 + m)$ time.
\end{theorem}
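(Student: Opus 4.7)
The plan is to assemble the pieces developed in Sections~\ref{sec:prelims}, \ref{sec:signatures}, \ref{sec:cycle_basis}, and~\ref{sec:selecting-cycles}. First I would apply the sparsification of Section~\ref{sec:prelims} to reduce to a graph with $m = O(n)$ and $\ell = O(n)$ in $O(n^2\log n + m)$ time, committing a fixed family of cycles to the output basis as a byproduct; since $\omega \geq 2$, the $n^2\log n$ term is absorbed by $n^\omega$ in the target bound. From this point on I work with the sparsified graph.

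Next I would execute the recursive procedure $\extend(1, m-n+1)$ from Section~\ref{subsec:cycle_basis_algorithm}. Its correctness rests on two facts already established: Lemma~\ref{lem:min_weight_odd_hit} guarantees that whenever a base-case call selects the minimum weight cycle $\gamma_j$ with $\langle S_j, [\gamma_j]\rangle = 1$, this cycle belongs to some minimum cycle basis; and the update step, implemented as $A = YX^{-1}$, preserves the invariants that $\langle S_{j'},[\gamma_{j''}]\rangle = 0$ for all relevant $j'',j'$ and that $S_1,\dots,S_{m-n+1}$ remain a basis of $\Z_2^{m-n+1}$. A straightforward induction on the extend/update recursion then shows that every $\gamma_j$ produced is independent of the previously selected cycles, so the final output is a minimum cycle basis by the matroid greedy property.

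For the running time, I would split the cost into three parts. First, the support-vector updates: the recurrence $T(k) = 2T(k/2) + O(nk^{\omega-1})$ with a base case of $O(2^{2g}n)$ resolves to $O(n^\omega + 2^{2g}n^2)$ over the full tree of recursive calls, exactly as analyzed in Section~\ref{subsec:cycle_basis_algorithm}. Second, the base-case cycle selections: I would invoke the preprocessing of Lemma~\ref{lem:min_odd_hit} once in $O(2^{2g}n^2)$ time to compute the collection of isometric cycles, partition them by homology class, and construct the generalized region trees $T_{\mathcal{C}}$; then each of the $m-n+1 = O(n)$ base cases answers its minimum-weight odd-hit query in $O(2^{2g}n)$ time, for a total of $O(2^{2g}n^2)$. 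Third, the $O(m)$ term absorbs the initial read of the input and the sparsification pass. Summing yields $O(n^\omega + 2^{2g}n^2 + m)$.

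The subtlest point is verifying that the Kavitha~\etal algebraic framework, designed for arbitrary graphs, interfaces cleanly with the surface-specific selection subroutine. This is where Corollary~\ref{cor:distinct_cycle_signatures} and Corollary~\ref{cor:signature_isomorphisms} do the crucial work: cycle signatures are an isomorphism between the cycle space and $\Z_2^{m-n+1}$, so the extend/update machinery manipulates exactly the same $\Z_2$-linear-algebraic objects as in the general setting, while all surface structure is encapsulated inside the faster minimum-weight odd-hit routine of Lemma~\ref{lem:min_odd_hit}. Once this translation is made, nothing else beyond bookkeeping is required to conclude the theorem.
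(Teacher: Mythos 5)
Your proposal is correct and follows essentially the same route as the paper, which proves the theorem implicitly by assembling the sparsification of Section~\ref{sec:prelims}, the extend/update framework and recurrence of Section~\ref{subsec:cycle_basis_algorithm}, and the cycle-selection procedure of Lemma~\ref{lem:min_odd_hit}. Your accounting of the three cost components and your appeal to Lemma~\ref{lem:min_weight_odd_hit} together with the signature isomorphism for correctness match the paper's intended argument.
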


\subsection{Improving the time for computing and representing isometric cycles}\label{sec:impr-time-find}

Here we discuss ways in which we can improve the running time for finding and representing isometric cycles using known techniques, thereby isolating the bottleneck of the algorithm to updating the support vectors and computing $\gamma_j$.

The set and representation of isometric cycles can computed recursively using $O(\sqrt{gn})$ balanced separators (e.g.~\cite{AD96}) as inspired by Wulff-Nilsen~\cite{W09}.  Briefly, given a set $S$ of $O(\sqrt{gn})$ separator vertices (for a graph of bounded genus), find all the isometric cycles in each component of $G \setminus S$ and represent these isometric cycles in at most $2^{2g}$ region trees per component, as described above.  Merging the region trees for different components of $G \setminus S$ is relatively simple since different sets of faces are involved.  It remains to compute the set of isometric cycles that contain vertices of $S$ and add them to their respective region trees.  First note that a cycle that is isometric in $G$ and does not contain a vertex of $S$ is isometric in $G \setminus S$, but a cycle that is isometric in $G \setminus S$ may not be isometric in $G$, so indeed we are computing a superset of the set of isometric cycles via this recursive procedure.  However, it is relatively easy to show that an isometric cycle of $G \setminus S$ can cross an isometric cycle of $G$ at most once, so, within a given homology class, isometric cycles will nest and be representable by a region tree.

To compute the set of isometric cycles that intersect vertices of $S$, we first compute shortest
path trees rooted at each of the vertices of $S$, generating the Horton cycles rooted at these
vertices; this procedure takes $O(\sqrt{gn}\cdot n)$ time using the linear time shortest path
algorithm for graphs excluding minors of sub-linear size~\cite{TM09}.  We point out that the
algorithm of Amaldi \etal~\cite{AIJMR09} works by identifying Horton cycles that are not
isometric and by identifying, among different Horton-cycle representations of a given isometric
cycle, one representative; this can be done for a subset of Horton cycles, such as those rooted in
vertices of $S$, and takes time proportional to the size of the representation of the Horton
cycles (i.e., the $O(\sqrt{n})$ shortest path trees, or $O(\sqrt{g}n^{1.5})$).

For a given homology class of cycles, using the shortest-path tree representation of the isometric
cycles, we can identify those isometric cycles in that homology class by computing the homology
signature of root-to-node paths in the shortest path tree as before; this process can be done in $O(\sqrt{g}n^{1.5})$ time.  We must now add these cycles to the corresponding region tree. Borradaile, Sankowski and Wulff-Nilsen~\cite{BSW15} describe a method for adding $n$ cycles to a region tree in $O(n \poly \log n)$ time that is used in their minimum cycle basis algorithm for planar graphs; this method will generalize to surfaces for nesting cycles.  Therefore computing the homology classes of these isometric cycles and adding these isometric cycles to the region trees takes a total of $O(2^{2g}\sqrt{g}n^{1.5})$ time.

In total, this recursive method for computing and building a representation of a superset of the isometric cycles takes time given by the recurrence relation $T(n) = 2T(n/2)+O(2^{2g}\sqrt{g}n^{1.5})$ or $O(2^{2g}\sqrt{g}n^{1.5})$ time.

\section{Homology Basis}
\label{sec:homology_basis}
We now describe our algorithm for computing a minimum homology basis.
Our algorithm works for both orientable and non-orientable surfaces, although we assume without
loss of generality that the surface contains at least one boundary component.
At a high level, our algorithm for minimum homology bases is very similar to our algorithm for
minimum cycle bases.
As before, our algorithm incrementally adds simple cycles~$\gamma_1, \dots, \gamma_{\beta}$ to the
minimum homology basis by maintaining a set of~$\beta$ support vectors~$S_1, \dots, S_{\beta}$
such that the following hold:
\begin{itemize}
  \item
    The support vectors form a basis for~$\Z_2^{\beta}$.
  \item
    When the algorithm is about to compute the $j$th cycle~$\gamma_j$ for the minimum homology
    basis, $\langle S_j, [\gamma_{j'}]_h \rangle = 0$ for all~$j' < j$.
\end{itemize}
Our algorithm chooses for~$\gamma_j$ the minimum weight simple cycle~$\gamma$ such that~$\langle
S_j, [\gamma]_h \rangle = 1$.
The following lemma has essentially the same proof as Lemma~\ref{lem:min_weight_odd_hit}.
\begin{lemma}
  \label{lem:min_weight_odd_hit_homology}
  Let~$S$ be a $\beta$-bit vector with at least one bit set to~$1$, and let~$\eta$ be the
  minimum weight cycle such that~$\langle S, [\eta]_h \rangle = 1$.
  Then,~$\eta$ is a member of the minimum homology basis.
\end{lemma}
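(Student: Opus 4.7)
The plan is to mirror the proof of Lemma~\ref{lem:min_weight_odd_hit} almost verbatim, with \emph{cycle} signatures replaced by \emph{homology} signatures and the cycle matroid replaced by the matroid of cycles with linearly independent homology classes. So I would first sort all simple cycles of~$G$ by weight as $\eta_1,\dots,\eta_N$ with $\eta_j=\eta$, then argue that $[\eta]_h$ cannot be written as a $\Z_2$-linear combination of $[\eta_1]_h,\dots,[\eta_{j-1}]_h$, and finally invoke a standard greedy matroid argument.

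For the independence step, I would take any subset $\Upsilon\subseteq\{\eta_1,\dots,\eta_{j-1}\}$ and compute $\langle S,\bigoplus_{\eta'\in\Upsilon}[\eta']_h\rangle$. By Lemma~\ref{lem:signature_sums} and bilinearity of $\langle\cdot,\cdot\rangle$ over $\Z_2$, this equals $\bigoplus_{\eta'\in\Upsilon}\langle S,[\eta']_h\rangle$. Each summand is~$0$: if $\langle S,[\eta']_h\rangle=1$ held for some $\eta'\in\Upsilon$, then $\eta'$ would be a strictly lighter cycle (or equal-weight but earlier in the ordering) satisfying the defining property of~$\eta$, contradicting minimality. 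Hence no nontrivial $\Z_2$-combination of the $[\eta_{j'}]_h$ with $j'<j$ equals $[\eta]_h$, since such a combination has dot product $0$ with $S$ while $\langle S,[\eta]_h\rangle=1$. This shows the homology class of~$\eta$ is linearly independent of those of the earlier cycles.

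For the matroid step, the homology classes themselves form a vector space $\Z_2^\beta$, so the collection of cycle-sets whose homology classes are linearly independent is a (representable) matroid of rank~$\beta$, and the minimum homology basis is a minimum-weight basis of this matroid. The greedy algorithm selects a minimum-weight basis, and since $\eta$ is greedily independent of all strictly lighter choices (as established above), it belongs to some output of greedy and therefore to some minimum homology basis.

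I do not expect any genuine obstacle here: the only substantive ingredient is the additivity of homology signatures under symmetric difference (Lemma~\ref{lem:signature_sums}), which is already available, together with the well-known matroid structure on homologically independent cycles. The only mild subtlety is the tie-breaking when several cycles share the weight of~$\eta$; I would handle this by fixing any total order refining the weight order before running the argument, so that ``lighter'' above means ``earlier in this refined order,'' which is harmless because a minimum basis is produced for every such refinement.
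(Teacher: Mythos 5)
Your proposal is correct and matches the paper's approach: the paper explicitly notes that Lemma~\ref{lem:min_weight_odd_hit_homology} "has essentially the same proof as Lemma~\ref{lem:min_weight_odd_hit}," namely ordering cycles by weight, using Lemma~\ref{lem:signature_sums} to show $\langle S, \bigoplus_{\eta'\in\Upsilon}[\eta']_h\rangle = 0$ for any subset of lighter cycles, and invoking the greedy matroid argument on homologically independent cycles. Your extra remark about tie-breaking is a harmless refinement the paper leaves implicit.
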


As before, our algorithm updates the support vectors and computes minimum homology basis cycles in
a recursive manner.
We define~$\extend(j,k)$ and~$\update(j,k)$ as before, using homology signatures in place of
cycle signatures when applicable.
Our algorithm runs $\extend(1,\beta)$ to compute the minimum homology basis.

The one crucial difference between our minimum cycle basis and minimum homology basis algorithms
is the procedure we use to find each minimum homology basis cycle~$\gamma_j$ given support
vector~$S_j$.
The homology basis procedure takes~$O(\beta^2 n \log n)$ time instead of~$O(2^{2g} n)$ time, and
it requires no preprocessing step.
We describe the procedure in Sections~\ref{subsec:double_cover}
and~\ref{subsec:homology_basis_cycles}.

The procedure~$\update(j,k)$ takes only~$O(\beta k^{\omega - 1})$ time in our minimum homology
basis algorithm, because signatures have length~$\beta$.
Therefore, we can bound the running time of $\extend(j,k)$ using the following recurrence:
$$
  T(k) =
  \begin{cases}
    2T(k/2) + O(\beta k^{\omega - 1}) & \text{if } k > 1 \\
    O(\beta^2 n \log n) & \text{if } k = 1
  \end{cases}
$$

The total time spent in calls to $\extend(j,k)$ where~$k > 1$ is~$O(\beta k^{\omega - 1})$.
The total time spent in calls to $\extend(j,1)$ is~$O(\beta^2 k n \log n)$.
Therefore,~$T(k) = O(\beta k^{\omega - 1} + \beta^2 k n \log n)$.
The running time of our minimum homology basis algorithm%
\footnote{Our minimum homology basis algorithm can be simplified somewhat by having~$\extend(j,k)$
  recurse on~$\extend(j,1)$ and~$\extend(j+1, k-1)$ and by using a simpler algorithm
  for~$\update(j,k)$.
  This change will increase the time spent in calls to~$\extend(j,k)$ where~$k > 1$, but the time
  taken by calls with~$k = 1$ will still be a bottleneck on the overall run time.}
  (after sparsifying~$G$) is~$T(\beta) = O(\beta^3 n \log n) = O((g + b)^3 n
  \log n)$.

\subsection{Cyclic double cover}
\label{subsec:double_cover}

In order to compute minimum homology basis cycle~$\gamma_j$, we lift the graph into a
\emph{covering space} known as the \EMPH{cyclic double cover}.
Our presentation of the cyclic double cover is similar to that of Erickson~\cite{e-sncds-11}.
Erickson describes the cyclic double cover relative to a single simple non-separating cycle in an
orientable surface;
however, we describe it relative to an arbitrary \emph{set} of non-separating \emph{co-paths}
determined by a support vector~$S$, similar to the homology cover construction of Erickson and
Nayyeri~\cite{en-mcsnc-11}.
Our construction works for non-orientable surfaces without any special considerations.

\begin{figure}[h]
  \centering
    \includegraphics[height=0.9in]{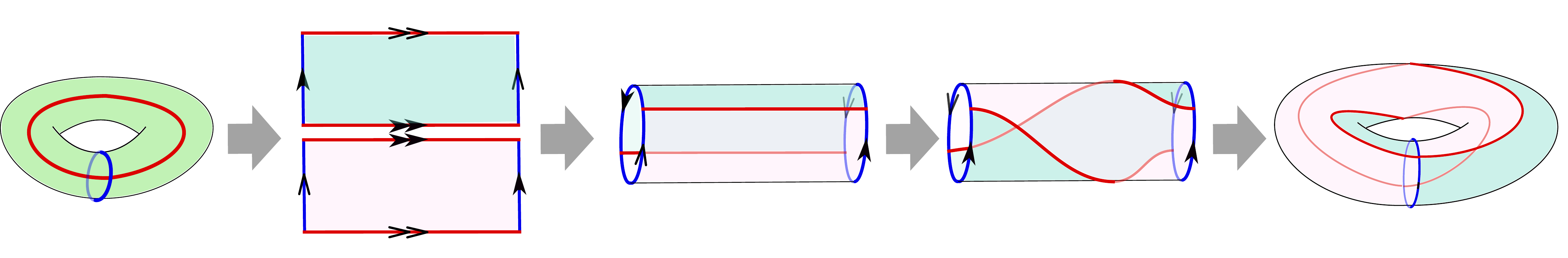}
    \caption{Constructing the cyclic double cover.
    Left to right:
    A pair of co-cycles~$\Psi$ on the torus~$\Sigma$;
    the surfaces~$(\Sigma', 0)$ and~$(\Sigma',1)$;
    identifying copies of one co-cycle;
    preparing to identify copies of the other co-cycle;
    the cyclic double cover.}
  \label{fig:dCover}
\end{figure}

Let~$S$ be a $\beta$-bit support vector for the minimum homology basis problem as defined above.
We define the cyclic double cover relative to~$S$ using a standard \emph{voltage
construction}~\cite[Chapter 4]{gt-tgt-01}.
Let~$G^2_S$ be the graph whose vertices are pairs~$(v,z)$, where~$v$ is a vertex of~$G$ and~$z$ is
a bit.
The edges of~$G^2_S$ are ordered pairs $(uv, z) := (u,z)(v, z \oplus \langle S, [uv]_h \rangle)$
for all edges~$uv$ of~$G$ and bits~$z$.
Let~$\pi : G^2_S \to G$ denote the \EMPH{covering map}~$\pi(v,z) = v$.
The \EMPH{projection} of any vertex, edge, or path in~$G^2_S$ is the natural map to~$G$ induced
by~$\pi$.
We say a vertex, edge, or path~$p$ in~$G$ \EMPH{lifts} to $p'$ if $p$ if the projection of $p'$.
A closed path in~$G^2_S$ is defined to bound a face (be a boundary component) of~$G^2_S$ if and
only if its projection with regard to~$\pi$ bounds a face (is a boundary component) of~$G$.
This construction defines an embedding of~$G^2_S$ onto a surface~$\Sigma^2_S$ (we will
prove~$G^2_S$ and~$\Sigma^2_S$ are connected shortly).

We can also define~$G^2_S$ in a more topologically intuitive way as follows.
Let~$\Psi$ be a set of co-paths which contains each co-path~$p_i$ for which the $i$th bit
of~$S$ is equal to~$1$.
Let~$\Sigma'$ be the surface obtained by cutting~$\Sigma$ along the image of each co-path
in~$\Psi$.
Note that~$\Sigma'$ may be disconnected.
Each co-path~$p_i \in \Psi$ appears as two copies on the boundary of~$\Sigma'$ denoted~$p_i^-$
and~$p_i^+$ (note that~$p_i^-$ and~$p_i^+$ may themselves be broken into multiple components).
Create two copies of~$\Sigma'$ denoted~$(\Sigma', 0)$ and~$(\Sigma', 1)$, and let~$(p_i^-,z)$
and~$(p_i^+,z)$ denote the copies of~$p_i^-$ and~$p_i^+$ in surface~$(\Sigma', z)$.
For each co-path~$p_i \in \Psi$, we identify~$(p_i^+, 0)$ with~$(p_i^-, 1)$ and we
identify~$(p_i^+, 1)$ with~$(p_i^-, 0)$, creating the surface~$\Sigma^2_S$ and the graph~$G^2_S$
embedded on~$\Sigma^2_S$.
See Figure~\ref{fig:dCover}.

The first three of the following lemmas are immediate.
\begin{lemma}
  \label{lem:projection}
  Let~$\gamma$ be any simple cycle in~$G$, and let~$s$ be any vertex of~$\gamma$.
  Then~$\gamma$ is the projection of a unique path in~$G^2_S$ from~$(s, 0)$ to~$(s, \langle S,
  [\gamma]_h \rangle)$.
\end{lemma}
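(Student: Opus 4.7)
The plan is to construct the lift of $\gamma$ edge by edge from $(s,0)$ and to track how the second coordinate evolves, exploiting the linearity of the inner product $\langle S, \cdot \rangle$ over $\Z_2$. First I would orient the edges of $\gamma$ as a sequence $e_1, e_2, \dots, e_k$ forming a closed walk starting and ending at $s$, which is possible since $\gamma$ is a simple cycle. The combinatorial definition of $G^2_S$ guarantees that for every oriented edge $e_i = u_i v_i$ of $G$ and every bit $z \in \Set{0,1}$, there is exactly one edge of $G^2_S$ projecting to $e_i$ whose tail is $(u_i, z)$, and that edge's head is $(v_i,\, z \oplus \langle S, [e_i]_h \rangle)$. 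Iterating this observation starting from $(s,0)$ produces a lifted path $\tilde\gamma$ in $G^2_S$ whose $i$th vertex has second coordinate $z_i$ satisfying $z_0 = 0$ and $z_i = z_{i-1} \oplus \langle S, [e_i]_h \rangle$.

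Next I would unroll this recurrence and use the bilinearity of the $\Z_2$ dot product together with the fact that $[\gamma]_h$ is by definition the bitwise exclusive-or of the homology signatures of the edges of $\gamma$, to obtain the terminal value
\[ z_k \;=\; \bigoplus_{i=1}^{k} \langle S, [e_i]_h \rangle \;=\; \Big\langle S,\; \bigoplus_{i=1}^{k} [e_i]_h \Big\rangle \;=\; \langle S, [\gamma]_h \rangle. \]
Hence $\tilde\gamma$ terminates at $(s, \langle S, [\gamma]_h \rangle)$, as required. For uniqueness, any path in $G^2_S$ starting at $(s,0)$ and projecting to $\gamma$ must traverse edges whose projections are $e_1, \dots, e_k$ in order; but each such edge is uniquely determined by its tail via the voltage construction, so the entire path is forced to coincide with $\tilde\gamma$.

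The argument is essentially a mechanical unrolling of the definition of the cyclic double cover, so I do not anticipate a real obstacle. The only mild subtlety to keep in mind is that $\gamma$ may be one-sided and $\Sigma^2_S$ may a priori be disconnected (the paper defers a proof of connectivity until later), but neither of these topological features affects the combinatorial lifting argument, which depends only on the vertex and edge sets of $G^2_S$ and on the bilinearity of $\langle \cdot,\cdot \rangle$.
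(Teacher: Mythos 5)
Your argument is correct and is exactly the intended one: the paper states this lemma without proof (calling it immediate), and the standard justification is precisely the edge-by-edge lift from $(s,0)$ via the voltage construction plus the telescoping identity $\bigoplus_i \langle S,[e_i]_h\rangle = \langle S,[\gamma]_h\rangle$. One caveat on uniqueness: a path in $G^2_S$ projecting to $\gamma$ need not traverse $e_1,\dots,e_k$ \emph{in that order} --- it could traverse the cycle in the reverse direction, and when $\langle S,[\gamma]_h\rangle = 1$ the preimage $\pi^{-1}(\gamma)$ is a single cycle of twice the length whose two arcs from $(s,0)$ to $(s,1)$ are genuinely distinct paths both projecting to $\gamma$; so uniqueness holds only up to the choice of orientation of $\gamma$. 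This imprecision is already present in the lemma as stated, and it is harmless, since only existence and the fact that the lift is length-preserving are used in Lemmas~\ref{lem:short_odd_projection} and~\ref{lem:double_cover_connected}.
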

\begin{lemma}
  \label{lem:short_lifts}
  Every lift of a shortest path in~$G$ is a shortest path in~$G^2_S$.
\end{lemma}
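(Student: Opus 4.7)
The plan is to argue by contradiction, using the fact that the covering map $\pi$ is a weight-preserving local isomorphism. First I would note that by the construction of $G^2_S$, every edge $(uv,z)$ has the same weight as its projection $uv$ in $G$, so for any path $\tilde{p}$ in $G^2_S$ the weight equals the weight of $\pi(\tilde{p})$, which is a walk in $G$ between the projected endpoints.

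Now let $\sigma$ be a shortest $u,v$-path in $G$ and let $\tilde{\sigma}$ be any lift of $\sigma$, starting at some vertex $(u,z)$ and ending at $(v, z \oplus \langle S, [\sigma]_h \rangle)$ by Lemma~\ref{lem:projection} applied edge-by-edge. Suppose for contradiction that $\tilde{\sigma}$ is not a shortest path in $G^2_S$, and let $\tilde{\tau}$ be a strictly shorter path between the same two endpoints in $G^2_S$. Then $\pi(\tilde{\tau})$ is a $u,v$-walk in $G$ whose total weight equals that of $\tilde{\tau}$, which is strictly less than the weight of $\sigma$. Since edge weights are non-negative, this walk contains a $u,v$-path of weight no larger than $\sigma$'s, contradicting the fact that $\sigma$ is a shortest $u,v$-path.

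There is no real obstacle here beyond being careful that (i) lifts are well defined (each edge of $G$ has exactly two lifts, one starting at each of $(u,0)$ and $(u,1)$, so a path lifts uniquely once its starting vertex is chosen), and (ii) $\pi(\tilde{\tau})$ is only guaranteed to be a walk rather than a simple path, which is handled by the standard observation that with non-negative weights any walk dominates a path in weight. The proof is therefore a short contradiction argument and requires no topological input beyond the weight-preservation and path-lifting properties of the covering map.
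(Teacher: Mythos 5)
Your argument is correct and is exactly the routine covering-space argument the paper has in mind: the paper omits a proof entirely, declaring this lemma (along with Lemmas~\ref{lem:projection} and~\ref{lem:short_odd_projection}) ``immediate,'' and your weight-preservation-plus-projection contradiction, with the standard walk-to-path reduction under nonnegative weights, is the intended justification.
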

\begin{lemma}
  \label{lem:short_odd_projection}
  Let~$\gamma$ be the minimum weight simple cycle of~$G$ such that $\langle S, [\gamma]_h \rangle
  = 1$, and let~$s$ be any vertex of~$\gamma$.
  Then~$\gamma$ is the projection of the shortest path in~$G^2_S$ from~$(s,0)$ to~$(s,1)$.
\end{lemma}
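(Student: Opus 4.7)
The plan is to show the inequality in both directions: that the shortest $(s,0)$-to-$(s,1)$ path in $G^2_S$ has weight at most $|\gamma|$, and that any such path in $G^2_S$ has weight at least $|\gamma|$. Combining these shows $\gamma$ and the shortest lifted path have equal weight, and then a short uniqueness argument from Lemma~\ref{lem:projection} pins down the projection.

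First, observe that Lemma~\ref{lem:projection} applied with this $\gamma$ (which satisfies $\langle S, [\gamma]_h \rangle = 1$) produces a lift of $\gamma$ that is an $(s,0)$-to-$(s,1)$ path in $G^2_S$ of weight exactly $|\gamma|$. So the shortest such path has weight at most $|\gamma|$.

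For the other direction, let $p'$ be any path in $G^2_S$ from $(s,0)$ to $(s,1)$, and consider its projection $\pi(p')$ as a closed walk in $G$. Because each edge $(uv,z)$ flips the second coordinate exactly when $\langle S, [uv]_h\rangle = 1$, and because $p'$ starts with second coordinate $0$ and ends with second coordinate $1$, the number of edges of $p'$ whose projections have $\langle S, \cdot\rangle = 1$ is odd. Let $\eta$ be the set of edges that appear an odd number of times in $\pi(p')$. Since $\pi(p')$ is a closed walk, every vertex of $G$ is incident to an even number of edges of $\eta$, so $\eta$ is a cycle in the sense of Section~1.1. Moreover, the weight satisfies $|\eta| \leq |\pi(p')| = |p'|$, and by bilinearity of the dot product and the counting observation above, $\langle S, [\eta]_h \rangle = 1$.

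The cycle $\eta$ may not be simple, so the last step is to decompose it into simple cycles $\eta_1, \dots, \eta_k$ (using any greedy Eulerian argument on $\eta$). Because $[\eta]_h = \bigoplus_i [\eta_i]_h$ by Lemma~\ref{lem:signature_sums}, at least one~$\eta_i$ satisfies $\langle S, [\eta_i]_h\rangle = 1$, and its weight is at most $|\eta| \leq |p'|$. Minimality of~$\gamma$ among simple cycles with odd pairing against $S$ forces $|\gamma| \leq |p'|$. Thus the shortest $(s,0)$-to-$(s,1)$ path in $G^2_S$ has weight exactly $|\gamma|$, and by Lemma~\ref{lem:projection} the unique lift of~$\gamma$ starting at $(s,0)$ is such a shortest path; conversely, any shortest such path projects to a simple cycle of weight $|\gamma|$ with odd $S$-pairing, which must be~$\gamma$. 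The main subtlety to be careful about is exactly this last decomposition step---ensuring that "minimum weight odd-hit cycle" and "minimum weight odd-hit simple cycle" coincide so that we can compare~$\gamma$ against~$\eta_i$.
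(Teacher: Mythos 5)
Your proof is correct. The paper offers no proof of this lemma at all---it is one of the three lemmas declared ``immediate'' after the construction of $G^2_S$---and your argument (lift $\gamma$ to get the upper bound; project an arbitrary $(s,0)$-to-$(s,1)$ path to a closed walk with odd pairing against $S$, reduce to its odd-multiplicity edge set, and decompose into simple cycles to invoke minimality of $\gamma$) is exactly the standard reasoning the authors are implicitly relying on, and you correctly identify the one real subtlety, namely that the minimum over closed walks with odd $S$-pairing is attained by a simple cycle. The only small overstatement is your final sentence: a shortest $(s,0)$-to-$(s,1)$ path need only project to a closed walk of weight $\abs{\gamma}$ with odd pairing, not necessarily to a simple cycle equal to $\gamma$; but this matches the (slightly abusive) phrasing of the lemma itself, and the algorithm in Section~5.2 only uses the weight and parity of the projection.
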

\begin{lemma}
  \label{lem:double_cover_connected}
  The cyclic double cover~$G^2_S$ is connected.
\end{lemma}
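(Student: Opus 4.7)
The plan is to combine two reachability observations. Fix a basepoint $(u_0,0) \in V(G^2_S)$ and let $H$ denote its connected component in $G^2_S$. First, by an immediate iterative extension of Lemma~\ref{lem:projection} from simple cycles to walks (apply the edge rule $(u,z)(v, z \oplus \langle S, [uv]_h\rangle)$ one edge at a time), every walk $p$ in $G$ from $u_0$ to a vertex $v$ lifts uniquely from $(u_0,0)$ to a walk in $G^2_S$ ending at $(v, \langle S, [p]_h \rangle)$, where $[p]_h$ is the XOR of the homology signatures of the edges of $p$. Since $G$ is connected, this already shows that for every $v \in V(G)$ at least one of $(v,0)$ or $(v,1)$ lies in $H$.

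To show the \emph{other} copy of each vertex also lies in $H$, I will exhibit a cycle $\zeta$ in $G$ with $\langle S, [\zeta]_h \rangle = 1$. Because $S$ is drawn from a basis of $\Z_2^{\beta}$ it is non-zero; fix an index $i$ with the $i$th bit of $S$ equal to $1$, and let $\zeta_i$ be the fundamental cycle of $e_i$ in $T$ from Lemma~\ref{lem:easy_basis}. Every edge $e \in T$ lies outside $C \cup \{e_j\}$ for every $j$, hence outside every $p_j$, so $[e]_h = 0$. Meanwhile, for $j \neq i$ the edge $e_i$ lies outside $C \cup \{e_j\}$ (since $e_i \in L$ and $L \cap C = \emptyset$), and $e_i$ lies on $p_i$ (adding $e_i$ to the coforest $C$ creates the co-cycle or co-path $p_i$, which must use $e_i$). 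Thus $[e_i]_h$ equals the $i$th standard basis vector of $\Z_2^{\beta}$. Applying Lemma~\ref{lem:signature_sums} to the fundamental cycle $\zeta_i = \{e_i\} \cup (\text{tree path})$ gives $[\zeta_i]_h = [e_i]_h$, and therefore $\langle S, [\zeta_i]_h \rangle = S_i = 1$.

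Now pick any vertex $v$ on $\zeta_i$. By Lemma~\ref{lem:projection}, the lift of $\zeta_i$ starting at $(v,0)$ ends at $(v, \langle S, [\zeta_i]_h\rangle) = (v,1)$, so $(v,0)$ and $(v,1)$ lie in the same component of $G^2_S$. Combined with the path-lifting observation from the first paragraph, both copies of every vertex of $G$ are reachable from $(u_0,0)$, so $H = V(G^2_S)$ and $G^2_S$ is connected. The main technical point is the calculation $[\zeta_i]_h = \mathbf{1}_i$, which is where the structure of the tree-coforest decomposition enters and which turns the purely algebraic statement ``$S \neq 0$'' into a geometric cycle in $G$ whose lift swaps sheets; the remainder is just a lift of two walks in the connected base graph.
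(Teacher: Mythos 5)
Your proof is correct and follows essentially the same route as the paper's: lift a path in the connected base graph to reach one copy of each vertex, then use a cycle $\gamma$ with $\langle S, [\gamma]_h\rangle = 1$ to swap sheets. The only difference is that you explicitly construct such a cycle as the fundamental cycle $\zeta_i$ (with the correct computation $[\zeta_i]_h = \mathbf{1}_i$ from the tree--coforest structure), whereas the paper simply asserts its existence, which already follows from the signature isomorphism of Corollary~\ref{cor:signature_isomorphisms}.
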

\begin{proof}
  There exists some simple cycle~$\gamma$ in~$G$ such that $\langle S, [\gamma]_h \rangle = 1$.
  Let~$s$ be any vertex of~$\gamma$.
  Let~$v$ be any vertex of~$G$.
  We show there exists a path from~$(v,z)$ to~$(s,0)$ in~$G^2_S$ for both bits~$z$.
  There exists a path from~$v$ to~$s$ in~$G$ so there is a path from~$(v,z)$ to one of~$(s,0)$
  or~$(s,1)$ in~$G^2_S$.
  The other of $(s,0)$ or $(s,1)$ may be reached by following the lift of~$\gamma$.
\end{proof}

Observe that~$G^2_S$ has~$2n$ vertices and~$2m$ edges.
Each co-path~$p_i$ shares an even number of edges with each face of~$G$.
By Lemma~\ref{lem:projection}, both lifts of any face~$f$ to~$G^2_S$ are cycles; in particular
both lifts are faces.
However, there may be one or more boundary cycles~$\gamma$ of~$G$ such that~$\langle S, [\gamma]_h
\rangle = 1$.
It takes both lifts of such a cycle~$\gamma$ to make a single boundary component in~$G^2_S$.
We conclude~$G^2_S$ contains~$2\ell$ faces and between~$b$ and~$2b$ boundary cycles.
Surface~$\Sigma^2_S$ has Euler characteristic $2n - 2m + 2\ell = 2\chi$.
It is non-orientable if and only if there exists a one-sided cycle~$\eta$ such that~$\langle S,
[\eta]_h \rangle = 0$.
If both~$\Sigma$ and~$\Sigma^2_S$ are non-orientable, then~$\Sigma^2_S$ has genus at most~$2g+b$.
If only~$\Sigma$ is non-orientable, then~$\Sigma^2_S$ has genus at most~$g + b/2 - 1$.
If both surfaces are orientable, then~$\Sigma^2_S$ has genus at most~$2g + b/2 - 1$.
In all three cases, the genus is at most~$O(\beta)$.

\subsection{Selecting homology basis cycles}
\label{subsec:homology_basis_cycles}

Let~$S$ be any $\beta$-bit support vector.
We now describe our algorithm to select the minimum weight cycle~$\gamma$ such that~$\langle S,
[\gamma]_h \rangle = 1$.
Our algorithm is based on one by Erickson and Nayyeri~\cite{en-mcsnc-11} for computing minimum
weight cycles in arbitrary homology classes, except we use the cyclic double cover instead of
their $\Z_2$-homology cover.
We have the following lemma.
While it was shown with orientable surfaces in mind, the proof translates verbatim to the
non-orientable case.
\begin{lemma}[Erickson and Nayyeri~{\cite[Lemma 5.1]{en-mcsnc-11}}]
  In~$O(n \log n + \beta n)$ time, we can construct%
\footnote{We only need to construct~$\Pi$ once for the entire minimum homology basis
algorithm, but constructing it once per basis cycle does not affect the overall run time.}
  a set~$\Pi$ of~$O(\beta)$ shortest paths in~$G$, such that every
  non-null-homologous cycle in~$G$ intersects at least one path in~$\Pi$.
\end{lemma}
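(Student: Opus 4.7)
The plan is to combine a single-source shortest path tree with a tree-coforest decomposition, echoing the signature construction from Section~\ref{sec:signatures}. First I would run Dijkstra's algorithm from an arbitrary vertex $r$ to produce a shortest path tree $T$ of $G$ in $O(n \log n)$ time (using $m = O(n)$ after sparsification). Next, in $O(n)$ time, I would extend $T$ to a tree-coforest decomposition $(T, L, C)$ by building a spanning coforest $C \subseteq E \setminus T$ via a BFS in the dual. Labeling $L = \{e_1, \dots, e_\beta\}$ with $e_i = u_i v_i$, I would set
$$\Pi := \bigl\{\, \sigma(r, u_i),\ \sigma(r, v_i) : i = 1, \dots, \beta \,\bigr\}.$$
Each path is a root-to-vertex path in $T$ and therefore a shortest path in $G$; $|\Pi| \leq 2\beta = O(\beta)$; and writing out all paths explicitly costs $O(\beta n)$ time, matching the claimed budget.

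For correctness I would argue by contradiction. Suppose some non-null-homologous cycle $\gamma$ shares no vertex with any path in $\Pi$. Then for each $i$, both $u_i$ and $v_i$ lie outside $V(\gamma)$, so $\gamma$ cannot traverse the edge $e_i$. Hence $\gamma \cap L = \emptyset$, meaning $\gamma$ lies entirely in the subgraph $T \cup C$. The cycle space of $T \cup C$ has dimension $(n-1+\ell)-n+1 = \ell$, which matches the dimension of the null-homologous cycle space of $G$. By Lemma~\ref{lem:easy_basis}, $\{\zeta_1, \dots, \zeta_\beta\}$ is a homology basis; since each $\zeta_i$ contains its own edge $e_i \in L$ (appearing in no other $\zeta_j$), no nonzero $\Z_2$-combination of the $\zeta_i$'s avoids $L$, so $\operatorname{span}\{\zeta_i\}$ meets the cycle space of $T \cup C$ only at $0$. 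A dimension count then forces the cycle space of $T \cup C$ to be a direct complement of $\operatorname{span}\{[\zeta_i]\}$ in the full cycle space, which is precisely the null-homologous subspace. Consequently $\gamma$ is null-homologous, a contradiction.

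The most delicate step is the final identification of the cycle space of $T \cup C$ with the null-homologous subspace; it rests on the direct-sum decomposition of the cycle space into the null-homologous cycles and the homology-basis span, together with the fact that the $\zeta_i$'s cannot cancel their distinguished $L$-edges. Everything else — Dijkstra from $r$, dual BFS for the coforest, and the per-edge bookkeeping needed to output the $2\beta$ tree paths — is routine and fits comfortably inside the $O(n \log n + \beta n)$ bound.
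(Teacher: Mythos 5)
Your construction and time budget are fine: the set $\Pi$ you build is exactly the tree-path skeleton of the greedy system of loops underlying the cited lemma, and since $V(\sigma(r,u_i))\cup V(\sigma(r,v_i))$ contains both endpoints of $e_i$, omitting the edges $e_i$ themselves is harmless. The gap is in the correctness argument, specifically the final step identifying the cycle space of $T\cup C$ with the null-homologous subspace. What you actually establish is that the cycle space of $T\cup C$ is \emph{a} complement of the span of $\{\zeta_1,\dots,\zeta_\beta\}$ whose dimension $\ell$ happens to equal that of the null-homologous subspace; but two complements of the same subspace need not coincide, and here they genuinely do not. Concretely, take the $2\times 2$ grid on the torus with one face designated as the boundary component. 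The two parallel horizontal edges joining $v_{00}$ and $v_{01}$ are dual to edges with four distinct dual endpoints, so both can be placed in the spanning coforest $C$; their union is a cycle contained in $C\subseteq T\cup C$ that wraps once around the torus and is therefore not null-homologous. Thus ``$\gamma\cap L=\emptyset$ implies $\gamma$ is null-homologous'' is false, and your contradiction never materializes.

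The underlying problem is that reducing the hypothesis to ``$\gamma$ avoids the edges of $L$'' discards exactly the information the lemma needs: $\gamma$ must avoid every \emph{vertex} of the tree paths $\sigma(r,u_i)$ and $\sigma(r,v_i)$, not merely refrain from traversing the $e_i$. (In the example above the offending cycle passes through $v_{01}$, an endpoint of an $L$-edge, so it is still caught by $\Pi$ --- the lemma is true, just not for your reason.) The proof behind the citation treats the loops $\lambda_i=\sigma(r,u_i)\cdot e_i\cdot\sigma(v_i,r)$ as curves on the surface: cutting $\Sigma$ along all of them (a greedy system of loops, or of arcs when there is boundary) leaves a disjoint union of disks, so any cycle disjoint from $\bigcup_i\lambda_i$ lies in a disk, is contractible, and hence is null-homologous. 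You need an argument of this topological flavor (or a crossing-parity argument against the $\lambda_i$); the purely linear-algebraic dimension count cannot close the gap.
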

Let~$G^2_S$ be the cyclic double cover of~$G$ with regard to~$S$.
Our algorithm constructs~$G^2_S$ in~$O(\beta n)$ time.

Suppose our desired cycle~$\gamma$ intersects shortest path~$\sigma \in \Pi$ at some vertex~$s$.
By Lemma~\ref{lem:short_odd_projection}, simple cycle~$\gamma$ is the projection of the shortest
path in~$G^2_S$ from~$(s,0)$ to~$(s,1)$.
Let~$\hat{\gamma}$ be this shortest path in~$G^2_S$.
Let~$\hat{\sigma}$ be the lift of~$\sigma$ to~$G^2_S$ that contains vertex~$(s,0)$.
By Lemma~\ref{lem:short_lifts}, path~$\hat{\sigma}$ is also a shortest path in~$G^2_S$.
If~$\hat{\gamma}$ uses any other vertex~$(v,z)$ of~$\hat{\sigma}$ other than~$(s,0)$, then it can
use the entire subpath of~$\hat{\sigma}$ between~$(s,0)$ and~$(v,z)$.

Now, consider the surface~$\Sigma^2_S \snip \hat{\sigma}$ which contains a single face bounded by
two copies of~$\hat{\sigma}$ we denote~$\hat{\sigma}^-$ and~$\hat{\sigma}^+$.
For each vertex~$(v,z)$ on~$\hat{\sigma}$, let~$(v,z)^-$ and~$(v,z)^+$ denote its two copies
on~$\hat{\sigma}^-$ and~$\hat{\sigma}^+$ respectively.
From the above discussion, we see~$\hat{\gamma}$ is a shortest path in~$\Sigma^2_S \snip
\hat{\sigma}$ from one of~$(s,0)^-$ or~$(s,0)^+$ to~$(s,1)$.

To find~$\gamma$, we use the following generalization of Klein's~\cite{k-msspp-05} multiple-source
shortest path algorithm:
\begin{lemma}[Cabello~\etal~\cite{cce-msspe-13}]
  \label{lem:mssp}
  Let~$G$ be a graph with~$n$ vertices, cellularly embedded in a surface
  of genus~$g$, and let~$f$ be any face of~$G$.
  We can preprocess~$G$ in~$O(g n \log n)$ time and~$O(n)$
  space so that the length of the shortest path from any vertex incident to~$f$ to any other
  vertex can be retrieved in~$O(\log n)$ time.
\end{lemma}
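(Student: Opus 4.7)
The plan is to generalize Klein's planar multiple-source shortest path (MSSP) algorithm to surfaces of genus $g$. First I would compute an initial single-source shortest path tree $T$ rooted at some vertex $v_0$ on the boundary of $f$ using Dijkstra's algorithm in $O(n \log n)$ time. I would then process the vertices of $f$ in cyclic order, and when the source moves from one boundary vertex $v_i$ to the next vertex $v_{i+1}$ (i.e.\ across a boundary edge of $f$), update $T$ by a sequence of elementary pivots: a non-tree edge whose slack just became negative enters $T$, displacing some tree edge that now has positive slack. Each such pivot can be executed in $O(\log n)$ time on a link-cut tree that maintains $T$ together with the slacks along each root-to-vertex path.

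The crux is to bound the total number of pivots over all $n$ source transitions by $O(gn)$, which would yield the claimed $O(gn \log n)$ preprocessing time. In the planar case Klein shows the total pivot count is $O(n)$ via a non-crossing argument: as the source walks around $f$, the shortest path tree evolves so that each dart enters and leaves $T$ at most a bounded number of times overall. In genus $g$, shortest paths can cross, but the key observation is that any two shortest paths cross at most $O(g)$ times as a topological consequence of the Euler characteristic bound, so each edge can participate in at most $O(g)$ additional pivot events per full traversal of $f$. This gives the $O(gn)$ pivot bound.

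For the $O(n)$ space bound, rather than storing a full snapshot of $T$ after each pivot, I would record only the sequence of pivot events in a persistent data structure. Using a persistent balanced BST representation of the Euler tour of $T$ together with path-compression in the persistence layer, each pivot contributes $O(1)$ amortized space (the total space being proportional to the pivot count, which is $O(gn)$ divided across persistent versions, or rescaled to $O(n)$ by packing). A distance query from vertex $u$ on $\partial f$ to any $v$ then locates the temporal version of $T$ corresponding to source $u$ in $O(\log n)$ time and reads off the root-to-$v$ distance stored along the persistent path.

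The main obstacle is precisely the pivot-counting argument in higher genus: in the plane the non-crossing property is immediate from Jordan curve considerations, but in genus $g$ one must argue topologically, tracking how shortest paths can only cross an evolving tree edge a number of times controlled by $g$, and ruling out oscillation by using uniqueness of shortest paths together with an exchange argument showing that the symmetric difference of two successive shortest path trees is itself a homologically trivial configuration. Once this combinatorial bound on pivots is established, the rest of the construction is a standard application of link-cut trees with persistence.
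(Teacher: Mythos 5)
This lemma is not proved in the paper at all: it is quoted verbatim as a black-box result of Cabello, Chambers, and Erickson~\cite{cce-msspe-13}, so there is no in-paper argument to compare yours against. What you have written is a sketch of how one might reprove the cited result, and while your high-level outline (initial Dijkstra tree, parametric pivoting as the source walks around $f$, dynamic trees at $O(\log n)$ per pivot, persistence for queries) does match the strategy of the cited work, the sketch has a genuine gap exactly where the cited paper does its real work: the $O(gn)$ bound on the total number of pivots. Your justification is a non sequitur. First, under the uniqueness assumption two shortest paths share at most one common subpath and hence cross at most \emph{once}, not ``$O(g)$ times''; the genus does not enter through pairwise crossings of shortest paths at all. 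Second, even granting some crossing bound between pairs of shortest paths, nothing in your argument connects it to the number of times a fixed dart can enter and leave the \emph{evolving tree} over the full traversal of $\partial f$. In Klein's planar analysis the $O(n)$ pivot bound comes from the interaction between the primal tree and the complementary dual structure (the pivoting dart moves monotonically through a dual tree); in genus $g$ that dual complement is no longer a tree but has $\Theta(g)$ surplus edges, and bounding the resulting non-monotonicity is the delicate part of the Cabello--Chambers--Erickson proof. Asserting it via ``the symmetric difference of two successive shortest path trees is a homologically trivial configuration'' does not constitute an argument.

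A secondary issue is the space bound. The persistent version of the structure naturally uses space proportional to the total number of pivots, i.e.\ $O(gn)$, plus the usual logarithmic overhead of persistence; your phrase ``rescaled to $O(n)$ by packing'' is not a real argument for the claimed $O(n)$ space. Since the paper under review uses this lemma purely as a citation, the right move is either to cite it as such or, if you want to prove it, to supply the actual pivot-counting lemma; as written, the central combinatorial claim is assumed rather than established.
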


Our algorithm iterates over the~$O(\beta)$ shortest paths present in~$\Pi$.
For each such path~$\sigma$, it computes a lift~$\hat{\sigma}$ in~$G^2_S$, cuts~$\Sigma^2_G$
along~$\hat{\sigma}$, and runs the multiple-source shortest path procedure of Lemma~\ref{lem:mssp}
to find the shortest path from some vertex~$(s,z)^{\pm}$ on~$\hat{\sigma}^{\pm}$ to~$(s,z \oplus
1)$.
Each shortest path it finds projects to a closed path~$\gamma'$ such that $\langle S, [\gamma']_h
\rangle = 1$.
By the above discussion, the shortest such projection can be chosen for~$\gamma$.
Running the multiple-source shortest path procedure~$O(\beta)$ times on a graph of
genus~$O(\beta)$ takes~$O(\beta^2 n \log n)$ time total.
We conclude the discussion of our minimum weight homology basis algorithm.
\begin{lemma}
  \label{lem:min_odd_homology}
  Let~$G$ be a graph with~$n$ vertices,~$m$ edges, and~$\ell$ faces cellulary embedded in a
  surface of genus~$g$ such that~$m = O(n)$ and~$\ell = O(n)$.
  For any $\beta$-bit support vector~$S$ we can compute the minimum weight cycle~$\gamma$ such that
  $\langle S, \gamma \rangle = 1$ in~$O(\beta^2 n \log n)$ time.
\end{lemma}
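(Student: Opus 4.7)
The plan is to combine the cyclic double cover construction from Section~\ref{subsec:double_cover} with the family of shortest paths~$\Pi$ given by Erickson and Nayyeri's lemma and to apply the multiple-source shortest path (MSSP) data structure of Cabello~\etal. First, observe that the desired cycle~$\gamma$ satisfies $\langle S,[\gamma]_h\rangle = 1$, so in particular it is not null-homologous. Therefore, by the cited lemma, $\gamma$ crosses some path~$\sigma \in \Pi$ at a vertex~$s$. My approach is, for each of the~$O(\beta)$ paths~$\sigma \in \Pi$, to locate in $G^2_S$ the best candidate cycle passing through a vertex of~$\sigma$, and to return the shortest candidate found.

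Concretely, I would first build~$G^2_S$ in $O(\beta n)$ time using the voltage description; since its genus is $O(\beta)$ (as computed via Euler characteristic just before Section~\ref{subsec:homology_basis_cycles}), it is still a surface-embedded graph on $O(n)$ vertices and edges. I would then build the set~$\Pi$ in $O(n \log n + \beta n)$ time. For each $\sigma \in \Pi$, I would pick an arbitrary lift $\hat{\sigma}$ of $\sigma$ in $G^2_S$, and cut $\Sigma^2_S$ along $\hat\sigma$ to obtain a surface with a single new face bounded by two copies $\hat\sigma^-$ and $\hat\sigma^+$. By Lemma~\ref{lem:short_odd_projection}, the sought minimum cycle~$\gamma$ that passes through some~$s\in\sigma$ projects from the shortest path in $G^2_S$ from $(s,0)$ to $(s,1)$; by Lemma~\ref{lem:short_lifts}, subpaths of $\hat\sigma$ itself are shortest, so after cutting, the desired lift is exactly a shortest path from one of $(s,0)^{\pm}$ on the new face boundary to $(s,1)^{\pm}$ (with opposite second coordinate). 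This reduces the search to shortest paths from vertices on a single face to arbitrary vertices, which is exactly what MSSP handles.

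I would then apply Lemma~\ref{lem:mssp} to $G^2_S$ cut along $\hat\sigma$, with~$f$ chosen to be the new face. Preprocessing takes $O(\beta n \log n)$ time (since the genus is $O(\beta)$), and each shortest-path distance query takes $O(\log n)$ time. For each vertex~$s$ on~$\sigma$, I issue the queries required to find the shortest $(s,0)^{\pm}$-to-$(s,1)^{\pm}$ path, retrieve the actual path, and project back down to a closed walk in~$G$ with $\langle S, [\,\cdot\,]_h\rangle = 1$. Taking the minimum over all~$s$ and all $\sigma \in \Pi$ yields a candidate cycle; by the preceding structural lemmas, the overall minimum equals~$\gamma$.

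The main obstacle is bookkeeping the total time rather than any deep structural issue: one must argue that each of the~$O(\beta)$ invocations of MSSP on a surface of genus~$O(\beta)$ costs $O(\beta n \log n)$ time, including path extraction, so the total is $O(\beta^2 n \log n)$. A minor subtlety to handle carefully is that the closed walk obtained by projection need not be a simple cycle; however, any such closed walk with $\langle S, [\cdot]_h\rangle = 1$ contains a simple cycle with the same property and no greater weight, so the minimum-weight closed walk found bounds the minimum-weight simple cycle from above and is achieved by~$\gamma$ itself, giving exactly the desired bound.
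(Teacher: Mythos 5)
Your proposal is correct and follows essentially the same route as the paper: build the cyclic double cover $G^2_S$, take Erickson and Nayyeri's set $\Pi$ of $O(\beta)$ shortest paths hit by every non-null-homologous cycle, cut along a lift of each path, and run the multiple-source shortest path structure on the genus-$O(\beta)$ cover, for $O(\beta^2 n \log n)$ total. Your closing remark that a non-simple projected closed walk still contains a simple cycle with $\langle S, [\cdot]_h\rangle = 1$ of no greater weight is a valid (and slightly more careful) justification of a step the paper leaves implicit.
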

\begin{theorem}
  \label{thm:homology_basis}
  Let~$G$ be a graph with~$n$ vertices and~$m$ edges, cellularly embedded in an orientable or
  non-orientable surface of genus~$g$ with~$b$ boundary components.
  We can compute a minimum weight homology basis of~$G$ in~$O((g+b)^3 n \log n + m)$ time.
\end{theorem}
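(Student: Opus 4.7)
The plan is to assemble the algorithm and analysis built up through Section~\ref{sec:homology_basis}. First, I would sparsify $G$ using the procedure of Section~\ref{sec:prelims}, guaranteeing $m = O(n)$ and $\ell = O(n)$; this is absorbed by the $O(m)$ additive term in the claimed bound, together with a constant number of trivially-handled cycles contributed to the basis by the sparsification. Then I would invoke $\extend(1,\beta)$, which recursively fills in the basis using $\update(j,k)$ to maintain support vectors and Lemma~\ref{lem:min_odd_homology} to realize each base case $\extend(j,1)$.

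For correctness I would argue by induction using Lemma~\ref{lem:min_weight_odd_hit_homology}. The invariant maintained by $\update(j,k)$ is that when the base case $\extend(j,1)$ fires, the current support vector $S_j$ satisfies $\langle S_j, [\gamma_{j'}]_h\rangle = 0$ for all $j' < j$. Hence the minimum weight cycle $\gamma_j$ with $\langle S_j, [\gamma_j]_h\rangle = 1$ lies outside the homology span of $\gamma_1,\dots,\gamma_{j-1}$, and by the matroid property of homology-independent cycle sets, $\gamma_j$ correctly extends a partial minimum homology basis.

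For the running time I would solve the recurrence
$$
T(k) =
\begin{cases}
2T(k/2) + O(\beta k^{\omega-1}) & \text{if } k > 1,\\
O(\beta^2 n \log n) & \text{if } k = 1,
\end{cases}
$$
whose base case is supplied by Lemma~\ref{lem:min_odd_homology}. This yields $T(\beta) = O(\beta^{\omega} + \beta^3 n \log n) = O(\beta^3 n \log n)$; substituting $\beta = O(g+b)$ and adding $O(m)$ for reading the input and sparsifying gives the claimed bound $O((g+b)^3 n \log n + m)$.

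The theorem itself is essentially a summary, and the only substantive obstacle is the content hidden in Lemma~\ref{lem:min_odd_homology}: constructing the cyclic double cover $G^2_S$ from the support vector $S$, lifting each of the $O(\beta)$ shortest paths in $\Pi$, cutting $\Sigma^2_S$ along a lift $\hat{\sigma}$, and running the multiple-source shortest path machinery of Lemma~\ref{lem:mssp} on a surface of genus $O(\beta)$ to extract the minimum weight closed walk whose projection has odd signature against $S$. Everything else in the theorem proof is routine assembly of the pieces already established.
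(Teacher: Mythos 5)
Your proposal is correct and matches the paper's own (implicit) proof of this theorem, which is simply the assembly of the sparsification step, the $\extend/\update$ recursion with Lemma~\ref{lem:min_weight_odd_hit_homology} for correctness, and Lemma~\ref{lem:min_odd_homology} for the base case, with the same recurrence solving to $T(\beta) = O(\beta^{\omega} + \beta^3 n \log n) = O((g+b)^3 n \log n)$. The only caveat (shared with the paper itself, so not a gap in your argument) is that the sparsification as described in Section~\ref{sec:prelims} computes all-pairs shortest paths; for the homology basis those shortest paths are never used, so the preprocessing can indeed be charged to the $O(m)$ term as you claim.
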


\bibliographystyle{abuser}
\bibliography{basis}

\end{document}